\newtheorem{lemma}{Lemma}
\newtheorem{proposition}{Proposition}[section]
\theoremstyle{definition}
\newtheorem*{proposition*}{Proposition}
 \newcommand{\beq}{\begin{eqnarray}}
 \newcommand{\beqq}{\begin{eqnarray*}}
 \newcommand{\eeq}{\end{eqnarray}}
 \newcommand{\eeqq}{\end{eqnarray*}}
 \newcommand{\pmtx}{\begin{pmatrix}}
 \newcommand{\pmtrx}{\end{pmatrix}}
\newcommand{\itemgo}{\begin{itemize}}
\newcommand{\itemend}{\end{itemize}}
\newcommand{\real}{\mathrm{Real}}
\newcommand{\R}{\mathbb{R}}
\newcommand{\osc}{\mathrm{osc}}
\newcommand{\cR}{\mathcal{R}}
\newcommand{\cE}{\mathcal{E}}
\newcommand{\cQ}{\mathcal{Q}}
\newcommand{\sgn}{\mathrm{sgn}}
\begin{document}

\begin{center}
\textbf{\large Optimal control of ribosome population for gene expression under periodic nutrient intake}

\vspace{7mm}

Clément Soubrier\textsuperscript{1}, Eric Foxall\textsuperscript{2}, Luca Ciandrini\textsuperscript{3,4}, Khanh Dao Duc\textsuperscript{1}
\end{center}
\vspace{5mm}
{\small
$^{1}$ Department of Mathematics, University of British Columbia, Vancouver, BC V6T 1Z4, Canada\\
$^{2}$ Department of Mathematics, University of British Columbia Okanagan, Kelowna, BC V1V 1V7, Canada\\
$^{3}$ Centre de Biologie Structurale (CBS), Univ Montpellier, CNRS, INSERM, Montpellier 34090, France\\
$^{4}$ Institut Universitaire de France (IUF)
}
\vspace{5mm}

\begin{abstract}
Translation of proteins is a fundamental part of gene expression that is mediated by ribosomes. As ribosomes significantly contribute to both cellular mass and energy consumption, achieving efficient management of the ribosome population is also crucial to metabolism and growth. Inspired by biological evidence for nutrient-dependent mechanisms that control both ribosome active degradation and genesis, we introduce a dynamical model of protein production, that includes the dynamics of resources and control over the ribosome population. Under the hypothesis that active degradation and biogenesis are optimal for maximizing and maintaining protein production, we aim to qualitatively reproduce empirical observations of the ribosome population dynamics. Upon formulating the associated optimization problem, we first analytically study the stability and global behaviour of solutions under constant resource input, and characterize the extent of oscillations and convergence rate to a global equilibrium. We further use these results to simplify and solve the problem under a quasi-static approximation. Using biophysical parameter values, we find that optimal control solutions lead to both control mechanisms and the ribosome population switching between periods of feeding and fasting, suggesting that the intense regulation of ribosome population observed in experiments allows to maximize and maintain protein production. Finally, we 
find some range for the control values over which such a regime can be observed, depending on the intensity of fasting. 
\end{abstract}

\section{Introduction}

Expression of genes is one of the most essential biological processes underlying life. One fundamental part of this process is the translation of proteins from messenger RNAs, mediated by molecular machines called ribosomes. Under optimal growth conditions, ribosomes massively contribute to both cellular dry mass (30\%) and energetic consumption (60\%) in \textit{E.coli} \cite{li2014quantifying}, suggesting that efficient management of the ribosomal population is crucial for controlling metabolism and growth, according to the amount of nutrients available \cite{bosdriesz2015fast}. Such a coupling has been widely investigated for the past few years to derive quantitative relationships between growth rate and resource allocation in bacteria \cite{bosdriesz2015fast, scott2010interdependence, scott_emergence_2014, dai2020coupling, roy2021unifying, Calabrese2023}. At the molecular level, these laws depend on different pathways that can either increase the ribosome population at fast growth rate, or keep a low level of actively translating ribosome under slower growth conditions
\cite{dai_reduction_2017, wu2022cellular}. In eukaryotes, similar relationships and nutrient-dependent mechanisms 
have been found to regulate the production or degradation of ribosomes, which could have various implications for cell growth, disease and cancer \cite{
dai2020coupling, metzl2017principles, xia2022proteome}. 

Although often neglected in the literature, protein degradation also plays a role in protein allocation models~\cite{calabrese2022protein, gupta2022global}. 
Actively targeting ribosomes for degradation, such as through the ribophagy pathway \cite{wyant2018nufip1}, allows for potential recycling of rare amino acids from ribosomal components \cite{wyant2018nufip1, nakatogawa2018spoon}. These mechanisms might have a role in many physiological contexts, also in higher organisms. For instance, resource recycling might be connected with the fluctuations of ribosome number in mice livers in response to daily rhythms, influenced by feeding-fasting and light-dark cycles~\cite{sinturel2017diurnal}. However, the relationship between resource recycling and quantitative laws has received limited attention in the field. Both theoretical modeling and quantitative experiments have yet to extensively explore the connection between resource recycling and the allocation of ribosomal resources.

In this paper, our main goal is to study whether the oscillatory patterns observed in the ribosomal populations can be explained by considering the control mechanisms associated with gene expression and nutrient availability. To do so, we introduce here a dynamical system of protein production, that includes resources used for the production of both ribosomal and non-ribosomal proteins. This model also features feedback and control functions that modulate the production and degradation of ribosomes and allow for some recycling of resources. By assuming that ribosome population is mainly controlled to stabilize and maximize expression of certain genes, we formulate an optimization problem under periodic conditions of both low and high nutrient input. To solve it, we first study the global stability and behaviour of trajectories of the system, and quantify the asymptotic convergence speed towards a global equilibrium. This result allows us to find conditions under which we can use a quasi-static approximation of the system, to derive an analytical threshold for the existence of optimal controls. Our numerical application with biophysical parameters indicates that during feeding-fasting cycles, the model can reproduce oscillations in agreement with experimental observations while maintaining stable gene expression.   As a result, we establish that recycling of ribosomes and reduction of ribogenesis can allow for optimal constant protein production during periods of fasting and feeding, and we establish the exact range for the control parameters to guarantee such a regime.

\section{Mathematical modeling of protein production and optimal control in eukaryotes} \label{section:model}

\subsection{Mathematical modeling of protein production with dynamics of resource and ribosomal population}

We model a system of protein translation, which takes into account the production and maintenance of ribosomes and the dynamics of common resources (e.g. amino acids) used for protein synthesis. We distinguish ribosomal proteins, which are components and precursors of full functional ribosomes (denoted with $R$), and other types of proteins ($P$). In the model, $R$ captures both ribosomes and ribosomal proteins, assuming that ribosomes are instantly assembled from ribosomal proteins with perfect stochiometry, which is a common assumption for this type of model \cite{bremer1975parameters,chure2023optimal,kostinski2020ribosome,kostinski2021growth}. We identify the available resources with the variable $E$, with an external intake function $\alpha$ ($ \mu m^{-3}h^{-1}$). In the application of our model, $E$ represents certain type of amino acids (arginine/lysine) that are specifically recycled during starvation \cite{wyant2018nufip1}. However, our model and its analysis are more general and could be applied to other type of resources, including ATP with no recycling of resources involved for example \cite{basan2015overflow}. For simplicity, we assume the system to be in constant volume (for instance a non-proliferating cell), so that we can express both ribosomes $R$ and resource $E$ either in number of particles or concentration ($\mu m^{-3}$). The various mechanisms underlying the dynamics of $R$ and $E$ are shown in Figure \ref{fig:diagram_system}, with the following differential equations describing their evolution, as

\beq
\frac{dR}{dt} &=& \left[\gamma_R(E) V - U-\beta_R\right] R \label{eq:dRdt}\\
\frac{dE}{dt} &=& \alpha(t) + \left[- c_R \gamma_R(E) V - c_P \gamma_P (E)(1-V) + c_U U\right]R. \label{eq:dEdt}     
\eeq

\begin{figure}[!ht]
	\centering
        \includegraphics[width=0.7\textwidth]{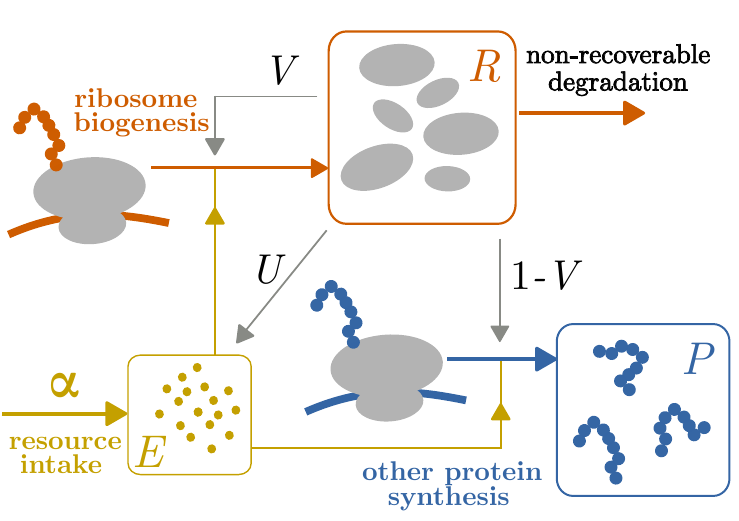}
	\caption{Mathematical model of protein production dynamics with resource and ribosomal population. A fraction $V$ of ribosomes (R) is allocated to produce other ribosomes (ribosome biogenesis), while the remaining fraction $1-V$ is responsible for the remaining protein synthesis (P). Biosynthesis rates depend on the amount of cellular resources (E) injected in the system (resource intake). Through the control $U$, the ribophagy process contributes to the pool E by recycling resources.}
\label{fig:diagram_system}
\end{figure}

The dynamics of ribosomes $R$ (Eq.\eqref{eq:dRdt}) is governed by  mechanisms that account for their production and degradation.  
The production of ribosomes is mediated by $R$, at a global rate $\gamma_R(E) V R$, such that a fraction $0\leq V \leq 1$  of the ribosome population is allocated to it, at a resource-dependent rate function $\gamma_{R}(E)$ ($h^{-1}$). The function $ V $ (unitless) allows to shift the fraction of resources allocated for the production of ribosomes, accounting for the control of the ribosome biogenesis. It depends on nutrient conditions~\cite{scott2010interdependence, dai2020coupling, metzl2017principles} and implicitly includes the fraction of active ribosomes~\cite{dai_reduction_2017}. Similarly, non-ribosomal proteins are produced at a rate $\gamma_P(E) (1-V) R$, where $\gamma_{P}(E)$ ($h^{-1}$) is another positive rate function. Our model also distinguishes passive non recoverable degradation $\beta_R R$ at exponential rate $\beta_R$ ($h^{-1}$), and active degradation into recoverable resource  $UR$, where  $U$ ($h^{-1}$)  is a positive rate function that models the control of the population. 
This active degradation term, quantified by $c_U UR$ in Eq.\eqref{eq:dEdt}, where $c_U$ is a conversion factor, leads to the recycling of the  ribosome into resource for the system (as suggested in \cite{sinturel2017diurnal}). For simplicity, we will refer to the active degradation mechanism as \emph{ribophagy} \cite{nakatogawa2018spoon}. Similarly, $c_R$ models the amount of resources used per ribosome produced  (accounting for ribosomal proteins and proteins involved in the ribosomal assembly process), and $c_P$ for non-ribosomal proteins. The biological significance of $U$ and $V$ is further examined in Discussion.

\subsection{Biophysical constraints and rate functions}

For the system to produce physical solutions, we detail here the conditions satisfied by the parameters, that we will further use to study the model. First, mass conservation between 
 resources $E$ and  ribosomes $R$ (amount of resources available by recycling \a ribosome cannot exceed the amount used to create the ribosome) yields

\beq
\frac{c_U}{c_R}  \leq 1. \label{eq:mass_conservation} 
\eeq
For the system to produce non-trivial steady-state solutions in ribosomes $R$, the overall rate of degradation given by ($\beta_R + U$) is also less than the maximal creation rate, denoted by $\gamma_{R,\max}\coloneqq \sup_E\gamma_{R}(E)$, so

\beq
\beta_R + U < V \gamma_{R,\max}. \label{eq:constraint2}
\eeq
The  rate functions $\gamma_{i}(E)$ ($i=P,R$) are strictly increasing functions of the concentration of resources (e.g. amino acids), and should converge to a maximum constant because of chemical kinetic limits, whereas at low concentration, we can assume that they are linear. As a first approximation, we hence model the elongation with Michaelis-Menten kinetics \cite{dai_reduction_2017,klumpp_molecular_2013}, where $\gamma_{i,\max}$ ($ h^{-1}$) is the maximum production rate and $k_D$ ($ \mu m^{-3}$) is the Michaelis-Menten constant. We also assume that the same constant $k_D$ is shared for proteins $P$ and ribosomes $R$, as it reflects the average elongation speed over the global amino acid pool (for detailed derivation and interpretation, see Appendix \ref{appendix_5}), as

\beq \label{eq:gamma_function}
	\gamma_{i}(E) = \gamma_{i,\max}\frac{E}{E+k_D}, \qquad (i=P,R).
\eeq
Since we are interested in studying the mechanisms that lead the ribosome population to diurnally oscillate depending on feeding-fasting rhythms, we consider the following piecewice constant function with period $\tau$ ($h$) for the nutrient intake function 
\begin{equation}\label{eq:def_alpha}
    \alpha(t) = \left\{
    \begin{array}{ll}
        \alpha_{\max} & \mbox{if } 0\leq t<\frac{\tau}{2}\\
        \alpha_{\min} & \mbox{if }  \frac{\tau}{2}\leq t<\tau.
    \end{array}
\right.
\end{equation}

\subsection{Optimization of gene expression}

Using the model, our goal is to study how the aforementioned control mechanisms yield solutions for the dynamical system, that also optimize for some objective function associated with protein production.
We turn our attention to the regulation of genes that control essential cellular functions, and assume that their expression should be stably maintained as much as possible under varying nutrient conditions \cite{fraser2004noise,joshi_what_2022}. We thus consider the following optimisation problem, in which the ribosome population is tuned over time
to enable high stable protein production. The production rate of non-ribosomal proteins in the model is expressed as
    \beq\label{eq:rho}
    \rho=\gamma_P (E) (1-V)R,
    \eeq 
which depends on the dynamics of the system, and thus also varies as a function of $U$ and $V$. We further assume that the control functions $U,V$ take their value in $\mathcal{D} = [U_{\max},U_{\min}]\times  [V_{\max},V_{\min}]$ respectively, with the condition \eqref{eq:constraint2} for non trivial steady states being satisfied, so that $\beta_R+U_{\max}<\gamma_{R,\max} V_{\min}$. To enable a maximal and constant non-ribosomal production in the model, we thus have to find control functions $\hat{U}$ and $\hat{V}$ that are solutions of the following optimization problem
\beq\label{eq:optimal_formulation}
\rho(\hat{U},\hat{V})=\max\limits_{U,V}\{\rho(U,V) \ | \ \rho(U,V)(t) \textrm{ is constant, and } \forall t , \ (U(t),V(t)) \in \mathcal{D}\}\,.
\eeq

\section{Results}

\subsection{Global stability analysis}\label{theor_analy}

To solve the optimization problem \ref{eq:optimal_formulation}, we first study the properties of the system for constant input $\alpha$ and control functions $U, V$. In this case, the system forms an autonomous dynamical system, which we can characterize in terms of equilibrium points and stability. In other words, we can evaluate, for a given amount of source nutrient, what levels of nutrient and ribosomes available can be observed at steady state.

\begin{proposition}\label{prop:stability} Let us assume $\alpha$ constant and the rate functions $\gamma_{i}(E)$ ($i=P,R$) satisfy $\gamma_i(0)=0$,  $\gamma_{i}$ is $\mathcal{C}^1$ with positive derivative $\gamma_i'>0$, and $\lim_{E\rightarrow \infty} \gamma_i(E) = \gamma_{i,\max} < \infty$.
Then, the solutions of the autonomous system \eqref{eq:dRdt}-\eqref{eq:dEdt} starting in $\R_{>0}^2$ remain strictly positive and converge to a globally asymptotically stable equilibrium point 
$(R^*,E^*) \in \R_{>0}^2$, given by
\beq
R^* &=& \frac{\alpha}{c_R(\beta_r+U)+c_P\gamma_P(E^*)(1-V)-c_U U}\label{eq:R*0}\\
E^* &=& \gamma_R^{-1}\left(\frac{\beta_r+U}{V}\right) \label{eq:E*0}.
\eeq
Moreover, all solutions initialized in $\R_{>0}^2$ oscillate, in the sense that $R(t)-R^*$ changes sign infinitely often as $t\to\infty$, if and only if $\Delta<0$, where
\beq
\Delta=(c_R\gamma_R'(E^*)V+c_P\gamma_P'(E^*)(1-V))^2R^{*2}-4\alpha \gamma_R'(E^*)V.
\label{eq:Delta}
\eeq
\end{proposition}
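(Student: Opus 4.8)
With $\alpha,U,V$ constant, \eqref{eq:dRdt}--\eqref{eq:dEdt} is a planar autonomous ODE, so the plan is: (a) show forward invariance of $\R_{>0}^2$ and identify the unique equilibrium there; (b) prove global convergence to it by combining a dissipativity estimate with the Bendixson--Dulac criterion and the Poincar\'e--Bendixson theorem; (c) extract the oscillation dichotomy from the linearization at the equilibrium. For (a): the line $\{R=0\}$ is invariant because $\dot R=0$ on it, so by uniqueness of solutions $R(0)>0$ forces $R(t)>0$ for all $t\ge0$; and on $\{E=0\}$ one has $\dot E=\alpha+c_UUR>0$ since $\gamma_i(0)=0$ and $\alpha>0$, so $\R_{>0}^2$ is forward invariant. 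Solving $\dot R=0$ with $R>0$ gives $\gamma_R(E)V=\beta_R+U$; since $\gamma_R$ is a $\mathcal{C}^1$ increasing bijection of $(0,\infty)$ onto $(0,\gamma_{R,\max})$ and $(\beta_R+U)/V<\gamma_{R,\max}$ by \eqref{eq:constraint2}, this has the unique root $E^*=\gamma_R^{-1}((\beta_R+U)/V)\in(0,\infty)$; substituting into $\dot E=0$ and using $\gamma_R(E^*)V=\beta_R+U$ yields \eqref{eq:R*0}, whose denominator equals $c_R\beta_R+(c_R-c_U)U+c_P\gamma_P(E^*)(1-V)>0$ by the mass-conservation bound \eqref{eq:mass_conservation}, so $(R^*,E^*)\in\R_{>0}^2$ is the unique equilibrium there.

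For (b), first establish \emph{dissipativity}: a compact $K\subset\R_{>0}^2$ that every trajectory eventually enters and never leaves. Writing $D(E)\coloneqq c_R\gamma_R(E)V+c_P\gamma_P(E)(1-V)-c_UU$, so that $\dot E=\alpha-D(E)R$, the first ingredient is $L\coloneqq E+c_RR$, for which
\[
\dot L=\alpha-\big(c_R\beta_R+(c_R-c_U)U+c_P\gamma_P(E)(1-V)\big)R\le\alpha-c_R\beta_RR,
\]
so $L$ strictly decreases once $R$ is large; the second is that for $E>E^*$ one has $\gamma_R(E)V-\beta_R-U>0$, hence $\dot R>0$, so $R$ cannot stay small while $E$ is large; the third is that $\dot E>0$ wherever $D(E)\le0$ (in particular for $E$ below the unique $E_0$ with $D(E_0)=0$), so $E$ is bounded below away from $0$. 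Chaining these facts through the four regions delimited by the nullclines $\{E=E^*\}$ and $\{R=\alpha/D(E)\}$ produces $K$, so every $\omega$-limit set is a nonempty compact connected subset of $\R_{>0}^2$. Next, apply Bendixson--Dulac with multiplier $\varphi=1/R$: then $\varphi\dot R=\gamma_R(E)V-\beta_R-U$ depends only on $E$, so
\[
\partial_R(\varphi\dot R)+\partial_E(\varphi\dot E)=-\big(c_R\gamma_R'(E)V+c_P\gamma_P'(E)(1-V)\big)<0
\]
throughout the simply connected region $\R_{>0}^2$, ruling out periodic orbits and homoclinic/heteroclinic cycles. By Poincar\'e--Bendixson the $\omega$-limit set of any trajectory therefore contains an equilibrium; as $(R^*,E^*)$ is the only one and (see below) a hyperbolic sink, hence admitting no nontrivial homoclinic loop, the $\omega$-limit set equals $\{(R^*,E^*)\}$. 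With local asymptotic stability from the linearization, this gives global asymptotic stability.

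For (c), linearizing at $(R^*,E^*)$ and using $\gamma_R(E^*)V=\beta_R+U$ and $D(E^*)R^*=\alpha$ gives
\[
J=\begin{pmatrix} 0 & \gamma_R'(E^*)VR^* \\[2pt] -\alpha/R^* & -\big(c_R\gamma_R'(E^*)V+c_P\gamma_P'(E^*)(1-V)\big)R^* \end{pmatrix},
\]
so $\operatorname{tr}J<0$, $\det J=\gamma_R'(E^*)V\alpha>0$ (the equilibrium is a hyperbolic sink), and the discriminant $(\operatorname{tr}J)^2-4\det J$ of its characteristic polynomial is exactly $\Delta$ from \eqref{eq:Delta}. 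If $\Delta<0$ the eigenvalues are non-real, $(R^*,E^*)$ is a spiral sink, and since every trajectory converges to it, each non-equilibrium trajectory winds around it infinitely often, so $R(t)-R^*$ changes sign infinitely often. If $\Delta\ge0$ the eigenvalues are real and negative; any trajectory eventually enters a neighborhood in which it approaches the equilibrium tangentially to a stable eigendirection, and no eigenvector has vanishing first component (one with $v_R=0$ would, since $J_{12}\ne0$, force $v_E=0$), so $R(t)-R^*$ keeps a fixed sign there; since moreover every zero of $R(t)-R^*$ along a non-equilibrium trajectory is transversal (if $R=R^*$ and $\dot R=0$ then $E=E^*$, i.e.\ the point is the equilibrium), only finitely many sign changes occur. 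Hence, the equilibrium solution aside, $R(t)-R^*$ changes sign infinitely often iff $\Delta<0$.

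The main obstacle is the dissipativity step. Unlike classical consumer--resource or predator--prey models there is no obvious rectangular or conical forward-invariant region, precisely because of the two-way feedback (a small ribosome pool lets $E$ accumulate, while a large resource pool makes $R$ grow), so the absorbing set must be assembled by carefully tracking the sign of $\dot R$ together with the estimate on $\tfrac{d}{dt}(E+c_RR)$ across the regions cut out by the two nullclines. Once boundedness is in hand, everything else is a routine appeal to standard planar-dynamics results.
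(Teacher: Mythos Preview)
Your proof is correct and, while it shares the Bendixson--Dulac step with the paper (same multiplier $1/R$, same computation), it is organised differently. The paper does not appeal to Poincar\'e--Bendixson. Instead, after the quadrant decomposition by the two nullclines, it proves directly (Lemma~\ref{trajectories}) that any trajectory either converges to $(R^*,E^*)$ or eventually hits the half-nullcline $S=\{(R,E^*):0<R<R^*\}$; it then builds the Poincar\'e first-return map $\mathcal P$ on $S$, uses Bendixson--Dulac to show $\mathcal P$ has no fixed point, and finishes by a one-dimensional monotonicity argument on $\mathcal P$. Your route---dissipativity $\Rightarrow$ bounded $\omega$-limit set $\Rightarrow$ Poincar\'e--Bendixson $\Rightarrow$ (with Bendixson--Dulac and uniqueness of the sink) $\omega$-limit $=\{(R^*,E^*)\}$---is the more standard packaging for planar systems and is shorter once the absorbing set $K$ is in hand. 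The trade-off is exactly the one you flag: your argument front-loads all the difficulty into the construction of $K$, which you only sketch via the Lyapunov-type quantity $L=E+c_RR$ together with the sign of $\dot R$ across $\{E=E^*\}$; the paper's quadrant-by-quadrant chase in Lemma~\ref{trajectories} is essentially the detailed version of that ``chaining'', so the underlying geometric work is the same. For part (c) you give a bit more detail than the paper, which simply invokes the stable manifold theorem; your observation that no eigenvector of $J$ has vanishing $R$-component (so nodal approach forces $R(t)-R^*$ to keep a sign) is the key point there, and it is correct.
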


\noindent A detailed proof of this result is provided in Appendix \ref{sec:appendix_stability}, where we apply the Hartman-Grobman theorem \cite{guckenheimer_nonlinear_1983} to study the local stability and behaviour of solutions near the equilibrium point, before introducing the Poincaré map to prove that it is globally asymptotically stable. The proof also requires assumptions of regularity and monotony of the rate functions $\gamma_R(E)$ and $\gamma_P(E)$ (listed as (ii) in section \ref{section:model}). Note that when $U\ge  V \gamma_{R,\max}-\beta_R$ (see Eq. \eqref{eq:constraint2}), the ribosomal population goes to extinction (Appendix \ref{sec:appendix_stability}). Figure \ref{fig:Pportrait_Tseries} illustrates the second part of the proposition, with two examples, one with  $\Delta<$ 0  (Figure \ref{fig:Pportrait_Tseries_osc} ) that leads to oscillations, and  one with $\Delta>$0 (Figure \ref{fig:Pportrait_Tseries_non_osc}) that does not.

\begin{figure}[!ht]
    \centering
    \begin{subfigure}[b]{0.48\textwidth}
        \includegraphics[width=\textwidth]{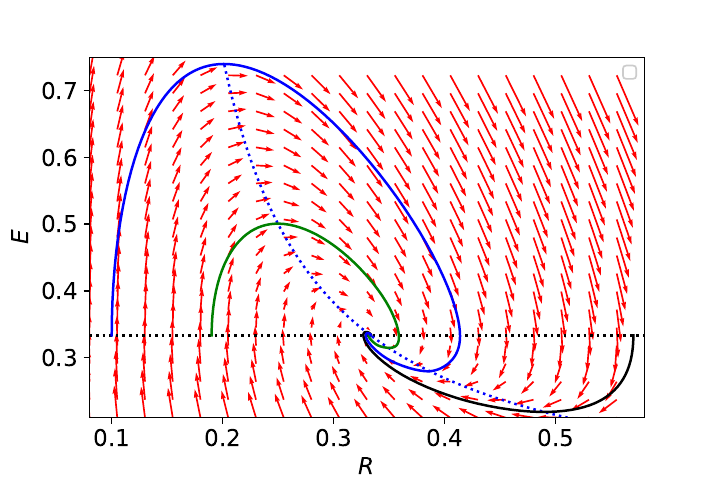}
        \includegraphics[width=\textwidth]{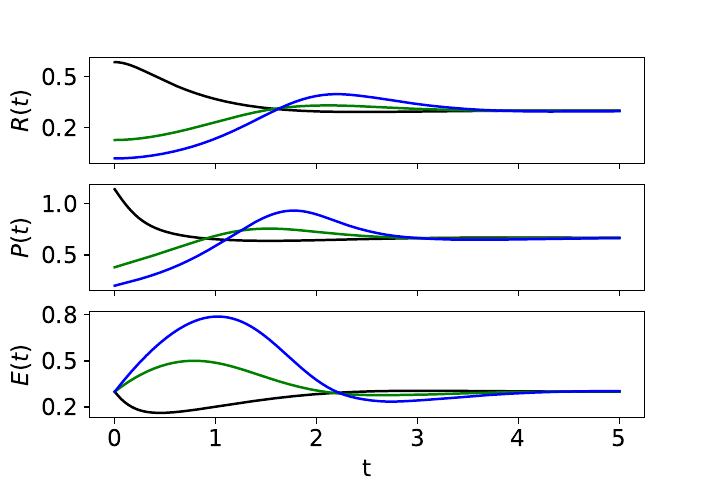}
        \caption{ $\Delta>0$, no oscillations}
        \label{fig:Pportrait_Tseries_non_osc}
    \end{subfigure}
    \unskip\ \vrule\ 
    \centering
    \begin{subfigure}[b]{0.48\textwidth}
        \includegraphics[width=\textwidth]{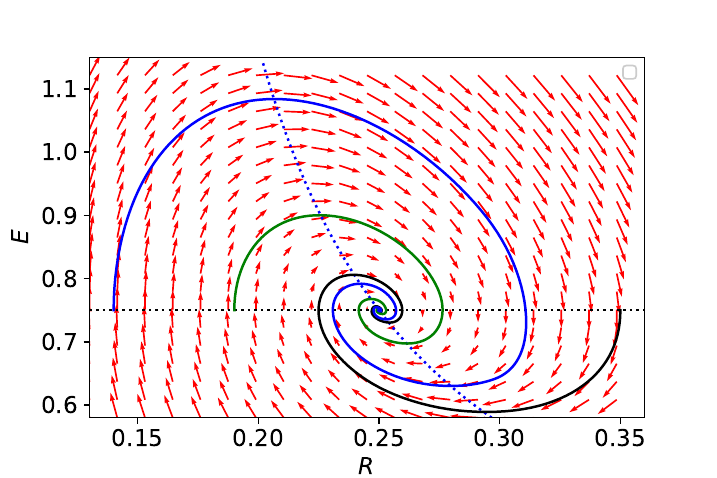}
        \includegraphics[width=\textwidth]{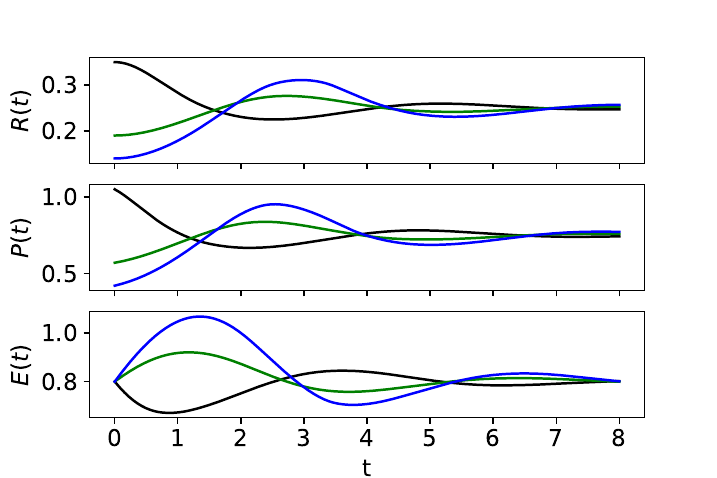}
        \caption{ $\Delta<0$, oscillations}
        \label{fig:Pportrait_Tseries_osc}
    \end{subfigure}
    \medskip
     \hrule 
    \medskip
    \begin{subfigure}[t]{0.48\textwidth}
        \includegraphics[width=\textwidth]{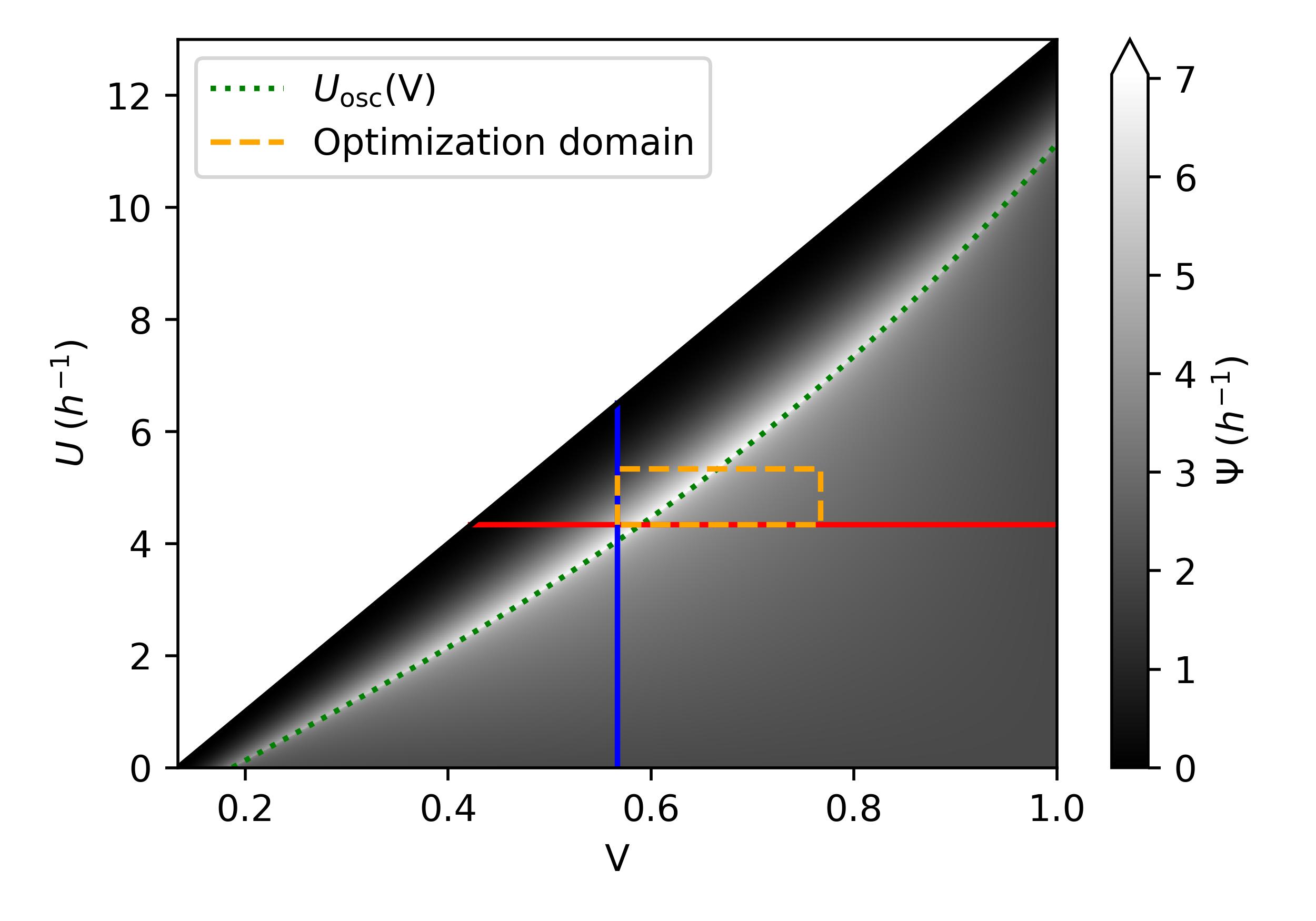}
        \caption{Heatmap of $\Psi$ as a function of $U$ and $V$}
        \label{heatmap_phi}
    \end{subfigure}
    \centering
    \begin{subfigure}[t]{0.48\textwidth}
        \includegraphics[width=\textwidth]{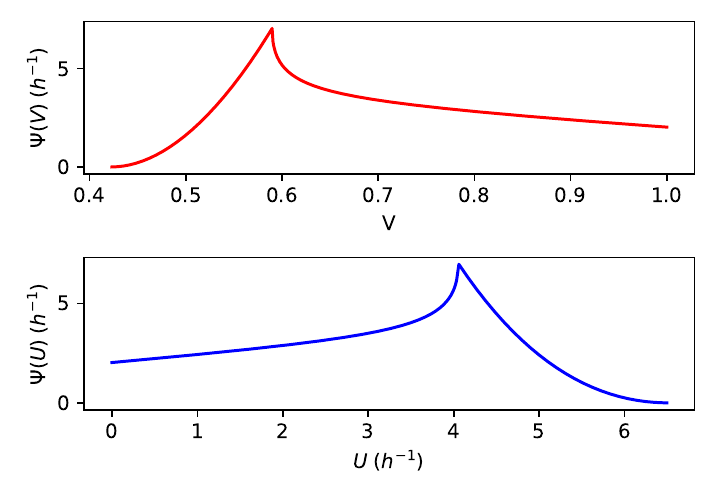}
        \caption{Convergence speed as a function of one parameter, with $U=4.3\;h^{-1},\;V=0.57$}
        \label{heatmap_Phase_UV}
    \end{subfigure}

    \caption{Phase portrait, time trajectories and convergence speed of the system, with resources $E$, ribosome population $R$ an protein production $P$.  Nullclines are represented in dotted black and blue lines (see also Appendix \ref{sec:appendix_stability}), with some trajectories in black, green and blue in the  non oscillatory case \ref{fig:Pportrait_Tseries_non_osc}, and in the oscillatory case \ref{fig:Pportrait_Tseries_osc}. Figure \ref{heatmap_phi} represents the asymptotic convergence speed of the system $\Psi$ as a function of the ribophagy rate $U$ and the ribogenesis rate $V$, for nutrient intake $\alpha= 2\cdot 10^8\; h^{-1}\mu m^{-3}$. The dotted box represents an example of the optimization domain. Figure \ref{heatmap_Phase_UV} shows a vertical and horizontal slice from Figure \ref{heatmap_phi}.}
    \label{fig:Pportrait_Tseries}
\end{figure}

\subsection{Quantification of the convergence to equilibrium}\label{sec:results_psi}

In this section, we quantify how quickly steady state levels can be reached, since in the next section when optimizing protein production, we will assume that convergence to the steady state is fast. The asymptotic convergence rate describes how fast the biological system react to small perturbations and will enable us to describe the system only with its steady states. We now assume that the elongation speed is given by a Michaelis Menten dynamics \eqref{eq:gamma_function}, and that the controls $(U,V)$ satisfy assumptions of section \ref{section:model} 
Combining Eq.\eqref{eq:gamma_function} with Proposition \ref{prop:stability}, we found that the steady state is
\beq
R^* &=& \frac{\alpha}{c_R(\beta_r+U)+\frac{c_P\gamma_{P,\max}(1-V)(\beta_R+U)}{\gamma_{R,\max}V}-c_U U} \label{R_star} ,\\
E^* &=& \frac{k_D}{\frac{V \gamma_{R,\max}}{(\beta_R+U)} -1}. \label{E_star}
\eeq
Similarly, setting $\Delta=0$ in Eq.\eqref{eq:Delta} and using Eq.\eqref{eq:gamma_function} yields critical values of nutrient intake $\alpha_{osc}$ and ribophagy rate $U_{osc}$ that both depend on $V$, to guarantee the existence of oscillations around the steady state $(R^*,E^*)$, as detailed in Proposition \ref{prop:oscillation} of Appendix \ref{appendix_2}. In Proposition \ref{prop:Psi_variation}, we then determine the behaviour of the convergence rate of the system towards equilibrium, defined by $\Psi(U,V):=-\real(\lambda)$, where $\lambda$ is the eigenvalue  of the Jacobian matrix with the largest real part, for the system at steady state $(R^*,E^*)$. In particular, we can derive from this proposition a lower bound for the asymptotic convergence rate to this steady state (stated in lemma \ref{lemma_lower_bound} Appendix \ref{appendix_2}), which will be useful for solving the optimal control problem \eqref{eq:optimal_formulation}, as
\beq
\label{eq:lower_bound}
\Psi(U,V) \geq \min \left( \Psi(U_{\min},V_{\max}),\Psi(U_{\max},V_{\min})\right).
\eeq
In Figure \ref{heatmap_phi}, we plot the values of the convergence rate $\Psi(U,V)$ in a heatmap, that illustrates that the lower bound in Eq. \eqref{eq:lower_bound} is indeed located either at $\Psi(U_{\min},V_{\max})$ or $\Psi(U_{\max},V_{\min})$, and shown in Figure \ref{heatmap_Phase_UV}. Figure \ref{heatmap_phi} also contains a line that defines an oscillation threshold for the ribophagy control $U_{\osc}(V)$, such that there is an infinite number of oscillations near the steady state for $U > U_{\osc} $, and a finite number else. We also note that the convergence rate $\Psi$ is not defined when constraint \ref{eq:constraint2} is violated (shown by the blank region in Figure \ref{heatmap_phi}), and goes to 0 as $\gamma_{R,\max}V-(\beta_R+U)\to 0^+$ (so the convergence time to the steady state goes to infinity). Knowing the rate of convergence $\Psi$ of the system can help with inferring the values of the controls $U,V$: if the convergence rate is measured experimentally (using the ribosomal population over time for example), a domain of admissible parameters can be estimated using the analytical formula of the convergence speed $\Psi$, or the values of the ribosome population and resources at steady state.

\subsection{Optimal control under quasi-static approximation}\label{sec:results_oc}
In light of the theoretical results described in the previous sections, we study how the ribophagy rate $U$ and the fraction allocating ribogenesis $V$ can allow for a high and stable protein production $\rho$. More precisely, we study the solutions associated with the optimal control problem defined in Eq.\eqref{eq:optimal_formulation}, where the intake function $\alpha(t)$ periodically switches between two values (Eq.\eqref{eq:def_alpha}). In this case, the convergence rate function $\Psi(U,V)$ depends on $\alpha$, and we assume here that this rate is large compared with the switching frequency ($\frac{2}{\tau}$) of the intake function $\alpha(t)$, that is

\begin{equation}\label{eq:Hypothesis_char_time}
  \forall (U,V,\alpha) \in \left\{U_{\min} , U_{\max} \right\}\times  \left\{ V_{\min} , V_{\max} \right\}\times \left\{ \alpha_{\min} , \alpha_{\max} \right\}, \qquad  \Psi_{\alpha}(U,V) \gg \frac{2}{\tau}.
\end{equation}
If this condition is satisfied, then using lower bounds on the convergence rate  Eq.\eqref{eq:lower_bound} makes $\Psi_{\alpha}(U,V) \gg \frac{2}{\tau}$ true for all $(U,V) \in\mathcal{D}$. In other words, all the solutions associated with control values defined in the set of admissible parameters $\mathcal{D}$ converge quickly to steady state $(R^*,E^*)$ compared to the period of $\alpha(t)$. This condition is necessary for the system to be quasi-static, in the sense that the characteristic time to converge to a new equilibrium is short relative to the time between changes in the value of the intake function $\alpha$. Note that this condition is not sufficient, so the approximation needs some further (numerical) validation.
The conditions for existence and solving the optimal control can then be formulated as 
\begin{proposition}\label{prop:cst_ctrl}
Let $\rho^*(U,V, \alpha)$ be the value of the production rate function $\rho$ (Eq. \eqref{eq:rho}) at the steady state $(R^*,E^*)$ associated with $(U,V, \alpha)$ (Eq. \eqref{eq:R*0}-\eqref{eq:E*0}), and assume that the quasi-static approximation holds (Eq. \eqref{eq:Hypothesis_char_time}). 
Then the optimal control problem Eq. \eqref{eq:optimal_formulation} admits a solution if and only if the equation 

\begin{equation}\label{eq:usvs}
\rho^*(U,V,\alpha_{\max}) = \rho^*(U_{\max}, V_{\min},\alpha_{\min})
\end{equation}
admits a solution $(U_s,V_s)$. In this case,
\begin{equation*}
    (\hat{U},\hat{V})(t) = \left\{
    \begin{array}{ll}
        (U_{\max},V_{\min})  & \mbox{if  } 0\leq t<\frac{\tau}{2}\\
        (U_s,V_s) & \mbox{if } \frac{\tau}{2} \leq t< \tau
    \end{array}
\right.
\end{equation*}
is optimal for Eq. \eqref{eq:optimal_formulation}, and solutions of Eq. \eqref{eq:usvs} define a 1D-manifold $\Gamma=[U_{\min},U_{\max}]\times[V_{\min},V_{\max}]\cap\{(\Tilde{F}_{\alpha_{\max}}(V),V), V\in[0,1]\}$, such that $\Tilde{F}_{\alpha_{\max}}$ is an increasing function of $V$.

\end{proposition}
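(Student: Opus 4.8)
The plan is to turn the optimal control problem into a one-parameter static problem by combining the quasi-static approximation with the special dependence of the equilibrium on the intake. Under \eqref{eq:Hypothesis_char_time} every trajectory tracks the equilibrium, so $\rho(U,V)(t)=\rho^*(U(t),V(t),\alpha(t))$ with $\rho^*=\gamma_P(E^*)(1-V)R^*$ given by \eqref{eq:R*0}--\eqref{eq:E*0}. The structural fact I would exploit is that $\rho^*$ is \emph{linear in $\alpha$}: by \eqref{eq:E*0} the equilibrium resource $E^*$ does not depend on $\alpha$, while by \eqref{eq:R*0} the equilibrium ribosome level is proportional to $\alpha$, so $\rho^*(U,V,\alpha)=\alpha\,g(U,V)$ for a function $g$ of the controls alone; $g$ is continuous and positive on $\mathcal D$ because $\beta_R+U_{\max}<\gamma_{R,\max}V_{\min}$ keeps $(\beta_R+U)/V$ inside the range of $\gamma_R$ (by \eqref{eq:constraint2}) and because $c_U\le c_R$ (by \eqref{eq:mass_conservation}) keeps the denominator of $g$ positive. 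Since $\alpha(t)$ takes only the two values $\alpha_{\min},\alpha_{\max}$ of \eqref{eq:def_alpha}, requiring that $\rho(t)$ be constant is the same as requiring a value $\rho_0\in\alpha_{\max}g(\mathcal D)\cap\alpha_{\min}g(\mathcal D)$, realized on the feeding half-period by any control with $g=\rho_0/\alpha_{\max}$ and on the fasting half-period by any control with $g=\rho_0/\alpha_{\min}$; maximizing $\rho$ amounts to maximizing this $\rho_0$.

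Next I would study $g$ on the rectangle $\mathcal D$. By continuity and connectedness, $g(\mathcal D)=[g_{\min},g_{\max}]$ is a compact interval, and I would show $g$ is strictly monotone in each control. It is strictly decreasing in $V$: increasing $V$ lowers $E^*$ (hence $\gamma_P(E^*)$, since $\gamma_P'>0$) and lowers $1-V$, hence lowers the numerator $N=\gamma_P(E^*)(1-V)$ of $g$, and writing $g=N/((c_R-c_U)U+c_R\beta_R+c_PN)$ shows $g$ increasing in $N$. It is strictly increasing in $U$: differentiation shows $\partial_U g$ has the sign of $N_U\,c_R\beta_R+(c_R-c_U)(U N_U-N)$, and inserting the Michaelis--Menten form \eqref{eq:gamma_function} (equivalently the explicit \eqref{R_star}--\eqref{E_star}, which give $N=\tfrac{\gamma_{P,\max}}{\gamma_{R,\max}}\tfrac{1-V}{V}(\beta_R+U)$ and $U N_U-N=-\beta_R N_U$) collapses this expression to a positive multiple of $c_U\beta_R$. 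Hence the maximum of $g$ on $\mathcal D$ is attained at the corner $(U_{\max},V_{\min})$ --- maximal recycling, minimal ribogenesis allocation --- so $g_{\max}=g(U_{\max},V_{\min})$ and $\alpha_{\min}g_{\max}=\rho^*(U_{\max},V_{\min},\alpha_{\min})$; likewise $g_{\min}=g(U_{\min},V_{\max})$.

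The existence criterion then falls out. Because $\alpha_{\min}<\alpha_{\max}$ and $g_{\min}>0$, one has $\alpha_{\max}g(\mathcal D)\cap\alpha_{\min}g(\mathcal D)=[\alpha_{\max}g_{\min},\alpha_{\min}g_{\max}]$, nonempty iff $\alpha_{\max}g_{\min}\le\alpha_{\min}g_{\max}$; when it is, the optimal constant production equals $\rho_0=\alpha_{\min}g_{\max}=\rho^*(U_{\max},V_{\min},\alpha_{\min})$, and when it is not, the feasible set of \eqref{eq:optimal_formulation} is empty. On the other hand \eqref{eq:usvs}, i.e. $\alpha_{\max}g(U,V)=\alpha_{\min}g(U_{\max},V_{\min})=\alpha_{\min}g_{\max}$, is the equation $g(U,V)=\tfrac{\alpha_{\min}}{\alpha_{\max}}g_{\max}$; since $\tfrac{\alpha_{\min}}{\alpha_{\max}}g_{\max}<g_{\max}$, it has a solution $(U_s,V_s)\in\mathcal D$ iff $\tfrac{\alpha_{\min}}{\alpha_{\max}}g_{\max}\ge g_{\min}$, i.e. iff $\alpha_{\max}g_{\min}\le\alpha_{\min}g_{\max}$ --- the same condition. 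Therefore \eqref{eq:optimal_formulation} has a solution iff \eqref{eq:usvs} does, and in that case the control equal to the maximizing corner $(U_{\max},V_{\min})$ on the low-intake half-period and to a solution $(U_s,V_s)$ of \eqref{eq:usvs} on the high-intake half-period yields $\rho\equiv\alpha_{\max}g(U_s,V_s)=\alpha_{\min}g_{\max}=\rho_0$; optimality is immediate since no admissible control can push $\rho$ above $\sup[\alpha_{\max}g_{\min},\alpha_{\min}g_{\max}]=\alpha_{\min}g_{\max}$.

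Finally, the solution set of \eqref{eq:usvs} is the level set $\{(U,V)\in\mathcal D:\ g(U,V)=\tfrac{\alpha_{\min}}{\alpha_{\max}}g_{\max}\}$; since $g$ is strictly increasing in $U$ this pins $U$ down as a function of $V$, and using the algebraic expressions \eqref{R_star}--\eqref{E_star} one can solve explicitly to obtain $U=\tilde F_{\alpha_{\max}}(V)$ as a rational function of $V$, while strict monotonicity in both variables gives $dU/dV=-\partial_V g/\partial_U g>0$ along the level set, so $\tilde F_{\alpha_{\max}}$ is increasing; intersecting its graph with $[U_{\min},U_{\max}]\times[V_{\min},V_{\max}]$ gives the claimed $1$-manifold $\Gamma$. \textbf{The main obstacle} is the strict monotonicity of $g$ in $U$: the two effects of $U$ --- its direct cost in the denominator of $g$ versus its raising $E^*$ and hence $\gamma_P(E^*)$ in the numerator --- compete, and it is precisely the Michaelis--Menten kinetics with a shared constant $k_D$ together with $c_U\le c_R$ that force the balance to come out positive; the rest is bookkeeping, modulo the caveat (already flagged in the text) that ``optimal'' is understood within the quasi-static approximation, whose validity must itself be checked numerically.
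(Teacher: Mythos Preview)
Your argument is correct and follows essentially the same route as the paper's proof in Appendix~\ref{appendix_3}: the monotonicity of $\rho^*$ in $U$ and in $V$ (their Proposition~\ref{prop:rho_star}, your monotonicity of $g$), the identification of the corner $(U_{\max},V_{\min})$ as the maximizer on the low-intake half-period, the matching equation on the high-intake half-period, and the implicit-function / monotonicity argument for $\tilde F_{\alpha_{\max}}$. Your explicit factorization $\rho^*(U,V,\alpha)=\alpha\,g(U,V)$ and the interval-intersection reformulation are a slightly cleaner packaging of what the paper does implicitly; note also that you (correctly, and in agreement with the paper's own proof and numerics in Section~\ref{section:numerical_application}) assign $(U_{\max},V_{\min})$ to the fasting half-period $\alpha=\alpha_{\min}$ and $(U_s,V_s)$ to the feeding half-period $\alpha=\alpha_{\max}$, which is swapped relative to the displayed piecewise formula in the proposition statement.
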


We study the variation of the objective function $\rho^*$ in Appendix \ref{appendix_3} and prove Proposition \ref{prop:cst_ctrl}, by finding the values of $U$ and $V$ that maximize the protein production rate $\rho^*$ (Eq. \eqref{eq:rho}) at low intake (when $0\leq t \leq \frac{\tau}{2}$), while matching with the protein production rate at high intake (when $\frac{\tau}{2}\leq t \leq \tau$). Also note that the function $\tilde{F}_{\alpha_{\max}}$ in Proposition \ref{prop:cst_ctrl} can be exactly expressed (see Eq. \eqref{eq:tildeF}, Appendix \ref{appendix_3}), and that the protein production is constant along $\Gamma$, so another criterion (e.g. maximizing the convergence rate to equilibrium) is required to obtain a unique optimal solution.

\subsection{Application to eukaryotic liver cells}\label{section:numerical_application}

In this section, we directly apply our results to the case of eukaryotic liver cells that motivated our analysis \cite{sinturel2017diurnal}. To do so, we use a biological set of parameters detailed in Appendix \ref{appendix_5}, with the only arbitrary parameters being the range of the control by ribophagy $U_{\min}$ and $U_{\max}$ (that is, to our knowledge, not accessible from the existing literature). The other parameters are derived, directly of indirectly, from experimental data. 
To verify if the quasi-static approximation holds under our set of parameters, we find the minimum of the convergence rate $\Psi$ over the admissible parameters $\mathcal{D}$ using Eq. \eqref{eq:lower_bound}, and we check that Eq. \eqref{eq:Hypothesis_char_time} is satisfied, with $
\min \Psi =1.0183 \;h^{-1}\;\gg\; \frac{2}{\tau}=0.0833\;h^{-1}$ (we also validated the quasi-static approximation numerically, by computing the convergence rate over the set of parameters $\mathcal{D}$).

\begin{figure}[!ht]
    \includegraphics[width=\textwidth]{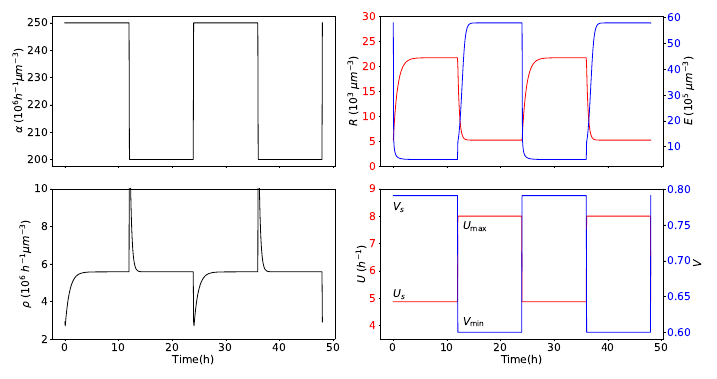}
    \caption{Example of an optimal control solution obtained from solving Eq. \eqref{eq:usvs} (Proposition \ref{prop:cst_ctrl}) with maximal constant protein production and maximal convergence rate (to enforce the quasi-static approximation, see section \ref{section:numerical_application}). Plotted variables are the nutrient intake $\alpha$, ribosome concentration $R$ and a resource concentration $E$, with the system producing proteins at a rate $\rho$ under the ribophagy $U$ and ribogenesis $V$ controls. Optimization was performed with the Newton method of the \texttt{scipy.optimize.minimize} python function.We used parameters from Table \ref{table:parameters}, as described in Appendix \ref{appendix_5}. Simulations were performed using the \texttt{solve\_ivp} function of the python \texttt{scipy} package \cite{2020SciPy-NMeth}.}
    \label{Heatmaps_cste_ctrl}
\end{figure}

 In Figure \ref{Heatmaps_cste_ctrl}, we illustrate the optimal control solutions, with a specific choice for the ribophagy and ribogenesis controls $(U,V)$ such that the convergence rate is maximized. In particular, the plots indicate that during rich-nutrient periods, ribosome population $R$  and control by ribogenesis $V$ are high, whereas the ribophagy $U$ and resources $E$ are low, and they all switch to opposite values when the system is under starvation. Note that the sharp peaks and troughs of proteins produced when switching happens is due to the discontinuity in the nutrient intake and control functions, and can be softened by smoothing these functions. Overall, our model is able to qualitatively reproduce the observations found in the mice liver, as  ribosome population level significantly increase during feeding periods, while polyadenylation and degradation of the rRNAs of incomplete ribosomal subunits lead to their decrease during starvation \cite{sinturel2017diurnal}. We also note that regardless of the control values for $U,V$ chosen on a 1D manifold that yields the optimal solutions (see Proposition \ref{prop:cst_ctrl}), we always obtain the same qualitative behaviour.

\subsection{Constraints on $\alpha, U$ and $V$ for the existence of optimal control solutions}\label{sec:constraints}

While the results of the previous section were obtained with an arbitrary  choice of domain values for the ribophagy rate $U_{\min}$ and $U_{\max}$, we can further study how the values of the controls are constrained to guarantee the existence of the optimal control solutions. First, we assume that the high nutrient condition is set, with the maximum resource intake $\alpha_{\max}$ given. Then for   given bounds on the ribogenesis fraction $V_{\min}$ and $V_{\max}$  (like in the previous section), we study the values of $U_{\min}$ and $U_{\max}$, such that the system can produce constant protein production over time. If it can, from proposition \ref{prop:rho_star} (Appendix \ref{appendix_3}), we know that  

\begin{equation}\label{eq:constraints_rho}
    \rho^*(\alpha_{\min}, U_{\max}, V_{\min}) \geq \rho^*(\alpha_{\min}, U, V_{\min}) = \rho^*(\alpha_{\max}, U_{\min}, V_{\max}). 
\end{equation}
We show in Appendix \ref{appendix_4} that this equation can produce a solution $U$ only if 
\begin{equation}\label{eq:alphamin_lb}
    0<q_1\leq\frac{\alpha_{\min}}{\alpha_{\max}} \leq q_2,
\end{equation}
where $q_1,q_2$ are constant threshold values that are analytically known (see Appendix \ref{appendix_4}), and can be interpreted as follows: The first threshold value $q_1$ indicates when starvation becomes too strong for the system to keep maintaining constant protein production. The second threshold $q_2$ indicates when the combined effects of ribophagy and ribogenesis cannot sufficiently reduce the protein production in high nutrient conditions.

Within this valid range of values for the minimum intake $\alpha_{min}$, the bounds on the ribophagy rate $(U_{\min},U_{\max})$ producing an optimal solution $U_s$, as defined in proposition \ref{prop:cst_ctrl}, is then given by a closed domain derived explicitly in Appendix \ref{appendix_4}. Figure \ref{fig:3d_plot1} a illustrates the domains obtained for different values of $\alpha_{\min}$, showing the size of the domain of valid parameters converging to 0 as $\alpha_{\min}$ gets close to the bounds given in equation \eqref{eq:alphamin_lb}. 

\begin{figure}[!ht]
    \centering
    \begin{subfigure}[t]{.48\textwidth}
        \includegraphics[width=\textwidth]{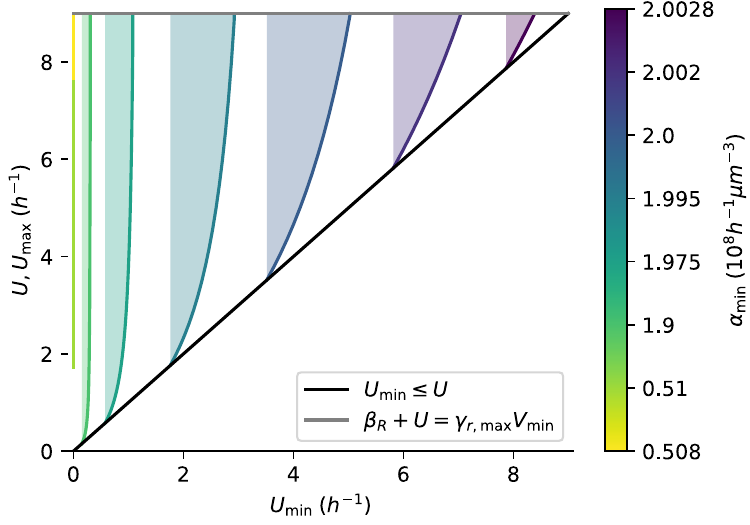}
        \caption{}
        \label{fig:3d_plot1}
    \end{subfigure}
    \unskip\ \vrule\ 
    \centering
    \begin{subfigure}[t]{0.48\textwidth}
        \includegraphics[width=\textwidth]{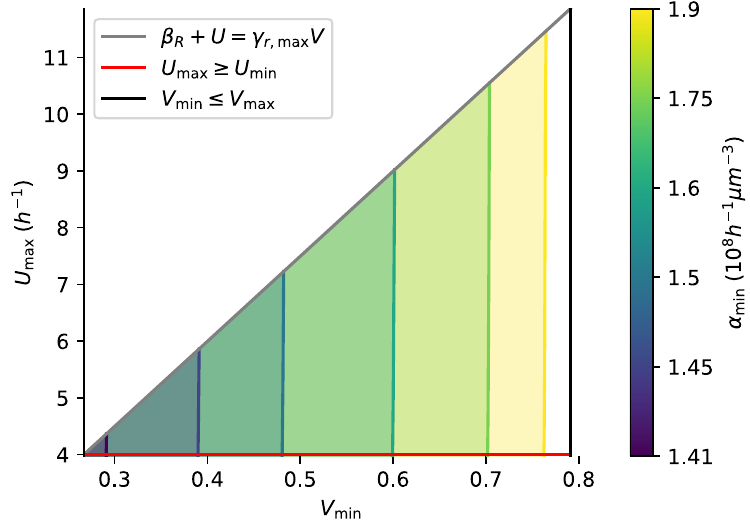}
        \caption{}
        \label{fig:3d_plot2}
    \end{subfigure}
    \caption{Representation of regions of admissible parameters in two different settings, as a function of the minimum resource intake $\alpha_{\min}$. The grey, red and black lines represent different inequalities as a boundary of the parameter domain. The other colored curves represent, for different values of $\alpha_{\min}$ displayed on the color bar, the value of controls such that the protein production stays constant during low intake period (starvation). In Fig.\ref{fig:3d_plot1}, the  ribogenesis control value range $(V_{\min},V_{\max})$ is fixed and the value range of ribophagy $(U_{\min},U_{\max})$ is studied, whereas in Fig.\ref{fig:3d_plot2}, the protein production is fixed ($\rho_0$), and the values of the controls $(U_{\max},V_{\min})$ are studied at low intake period (starvation). The maximal resource intake $\alpha_{\max}$ was set to $2.5\cdot 10^{-8} h^{-1}\mu m^{-3}$.} 
    \label{fig:3d_plot}
\end{figure}
 
 Alternatively, we can also constrain the parameters by assuming a given value $\rho_0$ for the protein production at high resource intake $\alpha=\alpha_{\max}$, and look for the bounds on ribophagy and ribogenesis controls  $(U_{\max},V_{\min})$ such that there exists an optimal control solution $(U_s,V_s)$ during low intake periods (see Figure \ref{Heatmaps_cste_ctrl}). In this case, we can derive  a constant value $q_3>0$, such that we have the following lower bound on the low resource intake value (see Appendix \ref{appendix_4}):
\begin{equation}\label{eq:domain_UV}
    \alpha_{\min}\geq\rho_{0}q_3.
\end{equation}

In Figure \ref{fig:3d_plot2}, we again illustrate the domains associated with equation \eqref{eq:domain_UV}, for different values of parameter $\alpha_{\min}$, showing that the domain shrinks to 0 as $\alpha_{\min}$ decreases. The different bounds can be used to predict if the system can produce proteins at a constant rate or not. Conversely, if we know how the system evolves under a certain resource intake function $\alpha$, these bounds can be used to further analytically extract other parameters of interest, such as the ribophagy rate $U$ or the ribogenesis ratio $V$.
\section{Discussion}

In this paper, we introduced a coarse grain system that models the dynamics of the ribosome population, with two control mechanisms that account for ribosome biogenesis and recycling. Motivated by experimental observations of oscillations linked to periods of high and low nutrients \cite{sinturel2017diurnal}, we formulated an optimal control problem. We solved this problem after proving results on the stability and convergence speed of the system, as it allowed us to find analytical solutions under a quasi-static approximation \cite{wachsmuth2012optimal}. We showed that for a set of biophysical parameters and specific input and control values, the system can produce optimal solutions in qualitative agreement with experimental observations. Note that since these parameters were taken from different species, and since these input and control values remain unknown, we cannot obtain a precise quantitative comparison with experimental data. Yet, our study shows that the main qualitative behaviour of the optimal solutions is robust to change in the parameters, as long as the quasi-static approximation holds.

Optimal control theory and methods have been used in various biological applications, including for optimizing drug scheduling \cite{swan1981optimal}, pathogens treatment design \cite{ewald2020trends}, biosynthesis in bioreactors \cite{lunz2023optimal} or cellular growth \cite{giordano2016dynamical}. In the context of gene expression, other studies also investigated how the behaviour of the system can be expressed as a solution of an optimal control problem, with a cost associated with some metabolic pathways \cite{klipp2002prediction} (e.g. the lac system of \textit{E.coli} \cite{dekel2005optimality}), that also depends on the environment \cite{poelwijk2011optimality, pavlov2013optimal,tsiantis2020using}.  While the mathematical solutions derived from these studies are time-dependent, our study models optimal controls that in contrast depend only on the state of the system, as the underlying biochemical reactions are sensitive to the amount of nutrient in the system \cite{wyant2018nufip1,tsokanos2016eif,kawamata_zinc_2017}. In addition, numerical approaches that are commonly used for solving optimal control problems in systems biology  \cite{tsiantis2020using,sharp2021implementation} were not adequate in our study, as we sought optimal solutions defined over a range of unknown parameters. By taking advantage of the relative simplicity of our model and using a steady state approximation similar to other studies in mechanics \cite{wachsmuth2012optimal,gao2021quasi}, we could explicitly solve the problem analytically and find optimal solutions that depend on the control and input parameters. To our knowledge, our study is hence the first to investigate the dynamics of ribosomes as solutions of an optimal control problem that includes ribophagy. The importance of ribosomes with regard to the production rate was previously investigated using a mathematical model, without taking into account the global dynamics of ribosomal population \cite{sabi2019modelling,katz2022translation,zarai2017controllability}. The impact on growth and metabolism was also more widely investigated \cite{scott2023shaping}, to derive various growth laws under hypothesis of growth-rate maximization \cite{scott_emergence_2014,scott2010interdependence}. Those laws have also been compared with numerical computation of optimal control associated with coarse-grained models of the cell  \cite{giordano2016dynamical}. While these previous studies specifically focused on studying the conditions for exponential growth of bacteria, our model relates with more complex eukaryotic systems that produce periodic patterns for the ribosome population, with the existence of regulatory pathways (e.g. ribophagy) that can tune the ribosome population.

In light of our findings, it would be interesting to study how to relate the control variables with their underlying biochemical pathways, which involve for example the Mechanistic Target of Rapamycin Complex 1 (mTORC1), that downregulates ribosome protein expression  when inactivated \cite{rosario2020mtorc1}, and can be triggered by amino acid starvation \cite{tsokanos2016eif}, as well as the interaction of lysosome and mTORC1 \cite{demetriades2014regulation}. For example, one could measure the impact of inhibition of mTOR pathways on translation \cite{rosario2020mtorc1}, to obtain a range of values for the ribogenesis control $V$ in our model. Another example of pathways induced by starvation is the inactivation of the TORC1 complex in yeast through zinc protein (such as the ribosomal protein Rpl37) that induces autophagy \cite{kawamata_zinc_2017}. More generally, the ribophagy control $U$, and its associated values ($U_{\min},U_{s}, U_{\max}$ etc.) models the rate of  ribosome active degradation under these molecular pathways \cite{nakatogawa2018spoon}, and could thus be potentially derived from more modeling at the molecular level. The ribophagy control $V$ represents the ratio between ribosomes allocated to ribosomal protein production and other essential proteins, which could be evaluated by comparing translation efficiencies across genes \cite{dao2018impact}. Alternatively, using the analytical description of the system as steady state, we can in principle indirectly estimate the values of both controls $U$ and $V$  from measurements of the convergence time  or response of the system to a change in nutrients (as specified in sections \ref{sec:results_psi} and \ref{sec:constraints}, respectively).

We note that while most of our biophysical parameters were derived from the literature (see Table \ref{table:parameters}), they are actually taken from different sources and organisms, and as such could require some correction to fit specific systems. We are in particular not aware of a quantification of the minimum and maximum ribophagy rates $U_{\min}$ and $U_{\max}$, as the literature is at its early stage of qualitative description and comprehension of ribophagy pathways and their properties \cite{wyant2018nufip1,nakatogawa2018spoon,kawamata_zinc_2017,Meiyan2018Ribophagy}. For example, it is unknown to what extent ribophagy and autophagy pathways are selective \cite{an_ribosome_2020}, and circadian rythms could also have an impact \cite{sinturel2017diurnal}, which we did not take into account in our model. 

There are also several assumptions, limitations and potential developments from our study. 
We first assumed that the volume of the system is fixed, while it can significantly vary in real systems (e.g. 20-30\% in \cite{sinturel2017diurnal}). While taking into account these variations would affect the dynamics of the system, it would actually not qualitatively impact the steady state and our theoretical results. More importantly, our simplified model reduces to two homogeneous pools of non-ribosomal and ribosomal proteins produced by the system, that neglect variations across genes, and would require some further compartmentalization to study the system at a finer scale. We also did not include noise in our model, as we considered large averaged populations, and focused our analysis on highly expressed genes with a low level of noise \cite{fraser2004noise}. In principle, our theoretical results can be extended for small enough fluctuations, by using methods for stochastic stability \cite{khasminskii2011stochastic}. Another interesting refinement of the model would be to take into account inactivation and activation mechanisms of the ribosome instead of having only passive degradation \cite{buskirk2017ribosome}, or include transcription dynamics \cite{chen1999modeling} (we restricted our model to translation, since it was cited as the predominant limiting factor in the experiments we modeled \cite{sinturel2017diurnal}). We note that such a refinement would imply working on dynamical systems of higher dimensions. In principle, separating time scales \cite{witelski2015methods} can still be used to solve optimal control with more dynamic variables (e.g. mRNA level for transcription, or different types of proteins). However, the global stability analysis would be more challenging, as we cannot leverage specific results for 2D systems anymore (e.g the Bendixson-Dulac theorem in Appendix A). 
It would also be interesting to extend the mathematical framework using delayed differential equations \cite{roussel1996use}, to study the potential impact of the dynamics of ribosomal assembly \cite{kressler_driving_2010} on the stability and optimal control solutions. The delay would represent the characteristic time of assembly of a ribosome. For a 2D stable dynamical system, we can prove that introducing a
small enough delay does not change local
stability (Proposition 1 in \cite{nelson2013dynamical}). Finally, our model and results could serve as a basis for studying the impact of ribosome control at both the macroscale level, with distinct populations of cells sharing resources, and the microscale of genes that share the pool of ribosomes to produce proteins.

\paragraph{Acknowledgments:} This research is supported by a NSERC Discovery Grant PG 22R3468, a PIMS-CNRS student mobility fellowship, and the France Canada Research Fund (FFCR). This research was supported in part through computational resources and services provided by Advanced Research Computing at the University of British Columbia. Luca Ciandrini is supported by the French National Research Agency (REF: ANR-21-CE45-0009) and by the 
Institut Universitaire de France (IUF).

\bibliographystyle{plos2015}
\bibliography{bibliography}

\newpage
\begin{appendices}
\section{Global asymptotic stability of the system }\label{sec:appendix_stability}

We prove in this Appendix section proposition \ref{prop:stability} from the main text (stated here again, for clarity):

\begin{proposition*} We assume $\alpha$ constant and the rate functions $\gamma_{i}(E)$ ($i=P,R$) satisfying (i) $\gamma_i(0)=0$, (ii) $\gamma_{i}$ is $\mathcal{C}^1$ with positive derivative $\gamma_i'>0$, and (iii) $\lim_{E\rightarrow \infty} \gamma_i(E) = \gamma_{i,\max} < \infty$.
Then, the solutions of the autonomous system \eqref{eq:dRdt}-\eqref{eq:dEdt} starting at $E(0), \ R(0) >0$ remain strictly positive for all time, and converge to a globally asymptotically stable equilibrium point 
$(R^*,E^*)$  on $\R_{>0}^2$, given by
\beq
R^* &=& \frac{\alpha}{c_R(\beta_r+U)+c_P\gamma_P(E^*)(1-V)-c_U U}\label{eq:R*}\\
E^* &=& \gamma_R^{-1}\left(\frac{\beta_r+U}{V}\right) \label{eq:E*}.
\eeq
Moreover, all solutions initialized in $\R_{>0}^2$ oscillate, in the sense that $R(t)-R^*$ changes sign infinitely often as $t\to\infty$, if and only if $\Delta<0$, where
\beq
\Delta=(c_R\gamma_R'(E^*)V+c_P\gamma_P'(E^*)(1-V))^2R^{*2}-4\alpha \gamma_R'(E^*)V.
\eeq
\end{proposition*}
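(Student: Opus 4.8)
The plan is to prove, in order: forward invariance of $\R_{>0}^2$ together with global existence; the stated formulas for the equilibrium, and the fact that \eqref{eq:mass_conservation} and \eqref{eq:constraint2} place it in $\R_{>0}^2$ as the unique interior equilibrium; global attractivity of that equilibrium; its local asymptotic stability; and finally the oscillation dichotomy. Positivity is immediate: $\{R=0\}$ is invariant since $\dot R=R\,[\gamma_R(E)V-U-\beta_R]$, and on $\{E=0,\,R>0\}$ one has $\dot E=\alpha+c_UUR>0$ using $\gamma_i(0)=0$, so the open quadrant cannot be left and solutions exist for all $t\ge0$. Equating the bracket in \eqref{eq:dRdt} to zero and inverting the increasing function $\gamma_R$ gives \eqref{eq:E*}, and substituting into $\dot E=0$ gives \eqref{eq:R*}; \eqref{eq:constraint2} makes $E^*$ well defined, while \eqref{eq:mass_conservation} forces the denominator in $R^*$ to be at least $c_R\beta_R>0$, so $(R^*,E^*)\in\R_{>0}^2$, and it is the only interior equilibrium because $D(E):=c_R\gamma_R(E)V+c_P\gamma_P(E)(1-V)-c_UU$ is strictly increasing.

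For global attractivity I would use the Lyapunov function
\[
W(R,E)=\Big(R-R^*-R^*\ln\tfrac{R}{R^*}\Big)+\int_{E^*}^{E}\frac{V\big(\gamma_R(s)-\gamma_R(E^*)\big)}{D(E^*)}\,ds,
\]
which is nonnegative on $\R_{>0}^2$, vanishes only at $(R^*,E^*)$, and is proper in $R$ and as $E\to\infty$. Using $\gamma_R(E^*)V=\beta_R+U$ and $\alpha=D(E^*)R^*$, the terms proportional to $(R-R^*)$ cancel and a short computation gives
\[
\dot W=-\frac{VR}{D(E^*)}\,\big(\gamma_R(E)-\gamma_R(E^*)\big)\big(D(E)-D(E^*)\big)\le0,
\]
with equality exactly on $\{E=E^*\}$ because $\gamma_R$ and $D$ are increasing. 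Since $\dot W\le0$ and, using $\alpha>0$ and $\gamma_i(0)=0$, $E$ stays bounded away from $0$, every interior orbit remains in a compact subset of $\R_{>0}^2$; by LaSalle its $\omega$-limit set lies in $\{E=E^*\}$, whose only invariant point is $(R^*,E^*)$, so every solution converges to $(R^*,E^*)$. (An alternative is Poincaré--Bendixson: the Dulac multiplier $1/R$ makes the relevant divergence equal to $-c_R\gamma_R'(E)V-c_P\gamma_P'(E)(1-V)<0$, excluding periodic orbits, and a sink cannot sit on a homoclinic loop.)

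The Jacobian at $(R^*,E^*)$ is $\begin{pmatrix}0 & \gamma_R'(E^*)VR^*\\ -\alpha/R^* & -d\end{pmatrix}$ with $d=\big(c_R\gamma_R'(E^*)V+c_P\gamma_P'(E^*)(1-V)\big)R^*>0$; it has trace $-d<0$ and determinant $\alpha\gamma_R'(E^*)V>0$, so the equilibrium is a sink, which together with the previous step yields global asymptotic stability. Its characteristic polynomial is $\lambda^2+d\lambda+\alpha\gamma_R'(E^*)V$, whose discriminant is exactly $\Delta$ in \eqref{eq:Delta}. I would then argue the dichotomy as follows. If $\Delta<0$, the eigenvalues form a complex pair $-\mu\pm i\omega$ with $\omega\neq0$; every orbit enters every neighborhood of the sink, and in polar coordinates around $(R^*,E^*)$ the angle satisfies $\dot\theta\to\pm\omega$ as the radius $\to0$, hence is eventually strictly monotone and unbounded, so the orbit winds around $(R^*,E^*)$ infinitely often and in particular crosses $\{R=R^*\}$ infinitely often, i.e. $R(t)-R^*$ changes sign infinitely often. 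If $\Delta\ge0$, the eigenvalues are real (with a single eigenvector when $\Delta=0$); every eigenvector $v$ has $v_R\neq0$ (from $\lambda v_R=\gamma_R'(E^*)VR^*\,v_E$ with $\lambda\neq0$), and by the classical description of the asymptotic direction of orbits at a hyperbolic (proper or improper) node, the orbit approaches $(R^*,E^*)$ tangent to one of the eigenvectors, hence eventually stays in a thin cone around a ray from $(R^*,E^*)$ transverse to $\{R=R^*\}$; thus $\sgn(R(t)-R^*)$ is eventually constant and changes sign only finitely often.

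The main obstacle will be the global attractivity step, since one must simultaneously rule out limit cycles and other recurrence and control orbits near infinity and near the boundary of the quadrant; the Lyapunov function above does both, but guessing it is the crux --- it is a Lotka--Volterra-type sum in which the $E$-weight is dictated by forcing the $(R-R^*)$ terms in $\dot W$ to cancel, and one still has to check it is proper enough for LaSalle's precompactness hypothesis on the open quadrant. A secondary point is the oscillation step, where the linear focus/node picture must be transferred to the nonlinear flow near the sink (the angular-velocity estimate when $\Delta<0$ and the asymptotic-direction statement when $\Delta\ge0$): these are standard facts from planar ODE theory but need to be invoked with care rather than read off the linearization alone.
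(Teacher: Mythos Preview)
Your argument is correct, and the global attractivity step is genuinely different from the paper's. The paper does \emph{not} use a Lyapunov function: it partitions $\R_{>0}^2$ by the two nullclines into four quadrants $\cQ_1,\dots,\cQ_4$, shows that any trajectory that never returns to the section $S=\{(R,E^*):0<R<R^*\}$ must converge to $(R^*,E^*)$ by a monotonicity case analysis, and then studies the Poincar\'e first-return map $\mathcal P$ on $S$, ruling out fixed points via the Bendixson--Dulac criterion with multiplier $1/R$ (exactly the alternative you mention) and concluding that the iterates of $\mathcal P$ march monotonically toward $R^*$. Your route is shorter and more robust: the Lotka--Volterra-type function $W$ with the $E$-weight chosen to cancel the $(R-R^*)$ cross term gives $\dot W=-\tfrac{VR}{D(E^*)}(\gamma_R(E)-\gamma_R(E^*))(D(E)-D(E^*))\le 0$ in one line, and LaSalle finishes. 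What the paper's approach buys is a geometric picture of how trajectories cycle through the quadrants, which it later reuses when discussing the transient behaviour and the oscillation threshold $U_{\osc}$; what your approach buys is that no case analysis or return-map regularity is needed, and precompactness of orbits comes almost for free (your side argument that $E$ stays bounded away from $0$ is needed because the $E$-part of $W$ is finite at $E=0$, but it is correct once $R$ is bounded via $W$). One small ordering quibble: you assert global existence before producing the Lyapunov bound, but forward invariance alone does not preclude blow-up; you should either postpone the global-existence claim until after $W$ gives boundedness, or note that $\dot W\le 0$ yields an a~priori bound on any maximal solution. The oscillation dichotomy is handled the same way in both proofs---local focus versus node at the sink, with your observation that every real eigenvector has $v_R\neq 0$ being exactly what forces eventual sign-definiteness of $R(t)-R^*$ when $\Delta\ge 0$.
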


\begin{proof} We first get $(R^*,E^*)$ by solving for the equilibrium point of the system with constant input and control functions. 
To assess the stability of $(R^*,E^*)$, we study the Jacobian of the system at $(R^*,E^*)$:

$$J=\pmtx 0 & \gamma_R'(E^*)V R^*\\
-\frac{\alpha}{R^*} & -(c_R\gamma_R'(E^*)V+c_P\gamma_P'(E^*)(1-V))R^*
\pmtrx.$$
The characteristic polynomial of $J$ is 
\begin{equation}\label{eq:charac_poly}
     C[X]= X^2 +(c_R\gamma_R'(E^*)V+c_P\gamma_P'(E^*)(1-V))R^* X+\alpha \gamma_R'(E^*)V,
\end{equation}
with discriminant 
$$\Delta=(c_R\gamma_R'(E^*)V+c_P\gamma_P'(E^*)(1-V))^2R^{*2}-4 \alpha \gamma_R'(E^*)V.$$
If $\Delta<0,$ the eigenvalues of $J$ have real part $-\frac{1}{2}(c_R\gamma_R'(E^*)V+c_P\gamma_P'(E^*)(1-V))R^*<0$.
If $\Delta\geq0$, eigenvalues are real and strictly negative, since the coefficients of $C[X]$ are positive.
In both cases, we can apply the Hartman-Grobman theorem \cite{guckenheimer_nonlinear_1983} to conclude that $(R^*,E^*)$ is locally asymptotically stable.\newline

\begin{figure}[!ht]
    \centering
         \includegraphics[width=\textwidth]{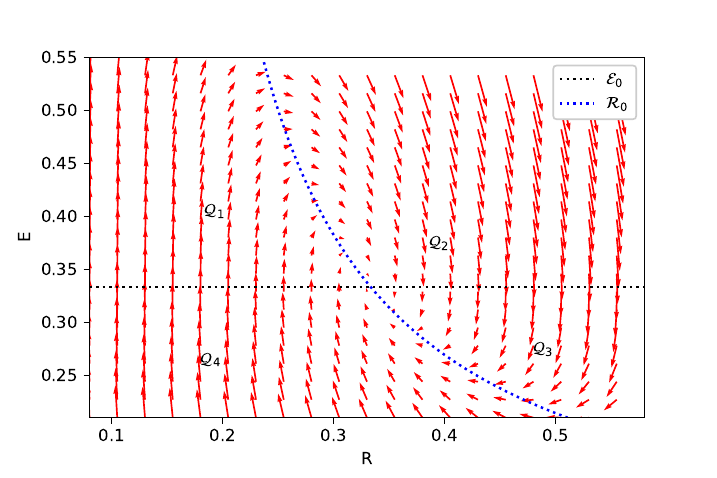}
        \caption{Vector field associated with the dynamical system. The $R$- and $E$-nullclines (equations \eqref{eq:Rnullcline}-\eqref{eq:Enullcline}) represented in blue and black dotted lines, respectively. They partition $\mathbb{R}_{>0}^2$ into four regions $\cQ_i$, $1 \le i \le 4$ (Eq. \eqref{eq:Qi}), and we also represent the section $S$, introduced in Eq. \eqref{eq:S} to define the Poincaré map and prove the asymptotic convergence of trajectories starting from $S$.}
        \label{fig:Q1234}
\end{figure}
To prove that $(R^*,E^*)$ is globally asymptotically stable on $\mathbb{R}_{>0}^2$, we examine the nullclines to partition the state space. The $R$- and $E$-nullclines of the system, on which $R'=0$, respectively, $E'=0$, are given by 
\begin{align}
\label{eq:Rnullcline}
  \mathcal{R}_0 &= \{(R, E^*), \ R>0 \}\\
  \mathcal{E}_0 &= \{(R_{\mathcal{E}_0}(E),E), \ E>E_{\cE_0} \}\label{eq:Enullcline},
\end{align}
where 
\begin{align}\label{eq:RE0E}
R_{\mathcal{E}_0}(E)=\frac{\alpha}{c_R \gamma_R(E)V + c_P \gamma_P (E)(1-V) - c_U U}
\end{align}
and $E_{\cE_0}$ is either the unique positive value of $E$ making the above denominator $0$, or $0$ if no such value exists. Note that $E_{\cE_0}<E^*$, and that $R_{\cE_0}$ is positive on its domain and decreases with $E$, since $\gamma_R,\gamma_P$ increase with $E$. As illustrated in Figure \ref{fig:Q1234}, we can thus partition $\R_{>0}^2$ into the disjoint union of the four open sets $\cQ_i$, $1 \le i \le 4$ given by
\begin{equation}
\label{eq:Qi}
\cQ_i := \{(R,E)\in \R_{>0}^2\colon \sgn(f_1(R,E),f_2(R,E)) = a_i\},
\end{equation}
where $a_1 = (1,1), \ a_2=(1,-1), \ a_3 = (-1,-1), \ a_4=(-1,1),$ and $f_1$ ($f_2$) is the function from equation \eqref{eq:dRdt} (\eqref{eq:dEdt}) associated with $dR/dt$ ($dE/dt$). We further introduce \beq\label{eq:S}\mathcal{S}=\left\{ (R,E^*) \ , \  0<R< R^*\right\}\eeq as the part of the $R$-nullcline that separates $\mathcal{Q}_1$ and $\mathcal{Q}_4$. Solutions of the system then verify the following lemma

\begin{lemma}\label{trajectories}

Let $\mathbf{X}(t)$ be a solution of the system with initial conditions $\mathbf{X}(0) \in \mathbb{R}_{+*}^2$ and let $t_s=\inf\{t\ge 0\colon \mathbf{X}(t) \in S\}$. If $t_s=\infty$ then $\lim_{t\to\infty} X(t) = (R^*,E^*)$.
\end{lemma}

\begin{proof}
Since the vector field $(f_1,f_2)$ is transverse to $\cR_0\cup \cE_0$ away from $(R^*,E^*)$, and looking at the direction of the vector field in Figure \ref{fig:Q1234}, if $\mathbf{X}(t) \in \cR_0\cup \cE_0 \backslash \{(R^*,E^*)\}$, then for some $t'<t<t''$ and some $i$, $\mathbf{X}(s) \in \cQ_i$ for $t'<s<t$ and $\mathbf{X}(s) \in \cQ_{i+1}$ (or $\cQ_1$ if $i=4$) for $t<s<t''$. So, defining $t_0=0$ and recursively, $t_{j+1}=\inf\{t>t_j\colon \mathbf{X}(t) \in \cR_0\cup \cE_0\}$, if $t_j<\infty$ and $j>0$ then $\mathbf{X}(t_j) \in \cR_0 \cup \cE_0$, and if $t_j<\infty$ then $\mathbf{X}(t) \in \cQ_{i(j)}$ for $t\in(t_j,t_{j+1})$, where $i(j)=i(0)+j \mod 4$. In other words, trajectories follow the itinerary $\cQ_1 \to \cQ_2\to \cQ_3\to \cQ_4\to \cQ_1$, although they may, after arriving in some $\cQ_i$, remain there forever. To prove the lemma, and  since $t_s$ is the first time the trajectory crosses from $\cQ_4$ into $\cQ_1$, we can then  show that for all $i=1,\ldots,4 $, if  $\mathbf{X}(0) \in \cQ_i$ and $t_1=\infty$ (i.e., if $\mathbf{X}(t)\in \cQ_i$ for all $t>0$) then $\lim_{t\to\infty}\mathbf{X}(t)=(R^*,E^*)$: 
\begin{enumerate}
    \item 
    $i=1$: $t\mapsto E(t)$ and $t\mapsto R(t)$ are increasing for $t\in [0,t_1)$. Since $f_1(R,E)=(\gamma_R(E)V-(\beta_R+U))R$, $f_1$ is increasing in both $R,E$ on the set $\{f_1>0\}$, so $f_1(R(t),E(t))\ge f_1(R(0),E(0))=:c>0$ and $R(t) \ge R(0)+ct$ for $0<t<t_1$. Since $R\le R^*$ on $\cQ_1$, it follows that $t_1 \le (R^*-R(0))/c$. In particular, $t_1<\infty$.
    \item 
    $i=2$: $t\mapsto R(t)$ is increasing and $t\mapsto E(t)$ is decreasing for $t\in [0,t_1)$. In particular, $E(0)\ge E(t)\ge E^*$ for all $t<t_1$, so if $t_1=\infty$ then $E_\infty:=\lim_{t\to\infty}E(t)$ exists. If $t_1=\infty$ and $R(t)\le R^*$ for all $t>0$ then $R_\infty:=\lim_{t\to\infty} R(t)$ exists. By continuity of the vector field, $(R_\infty,E_\infty)$ is an equilibrium point, so must be equal to $(R^*,E^*)$. If $R(t)>R^*$ for some $t\in [0,t_1)$, then for $s>t$,
    $$E'(s)\le \alpha + \left[- c_R \gamma_R(E^*)V - c_P \gamma_P (E^*)(1-V) + c_U U\right]R(t)=:c<0.$$
    Arguing as in Case 1, $t_1\le t + (E(t)-E^*)/c<\infty$.
    \item  $i=3$: $t\mapsto R(t)$, $t\mapsto E(t)$ are both decreasing for $t\in [0,t_1)$, and $E(0)<E^*$. Since $\R_{>0}^2$ is forward invariant, $R(t)\ge 0$ and $E(t)\ge 0$ for all $t>0$ so if $t_1=\infty$ then $(R_\infty,E_\infty)$ exist as before, so must be equal to $(R^*,E^*)$. But $E_\infty\le E(0)<E^*$ which is a contradiction, so $t_1<\infty$.
    \item  $i=4$: $t\mapsto R(t)$ is decreasing and $t\mapsto E(t)$ is increasing for $t\in [0,t_1)$. If $t\in [0,t_1)$ then $E(t) \le E^*$, and $R(t)\ge 0$ for all $t>0$, so if $t_1=\infty$ then $(R_\infty,E_\infty)$ exist and must be equal to $(R^*,E^*)$.
\end{enumerate}
\end{proof}
To prove $(E^*,R^*)$'s global stability, we now only need to show that trajectories starting at and crossing $\mathcal{S}$  eventually converge to $(R^*,E^*)$. We parameterize these trajectories by their initial position $r\in \mathcal{I}$, where  

\begin{equation}
 \mathcal{I} = \left\{ r \in ]0; R^*[ \ |  \ \mathbf{X}(0) = \pmtx r \\ E^*\pmtrx \Rightarrow \exists t>0, \ \mathbf{X}(t)\in \mathcal{S}  \right\}.  
\end{equation}
Assuming that  $\mathcal{I}$ is non-empty (if not, we can directly conclude using Lemma \ref{trajectories} that $(R^*,E^*)$ is globally asymptotically stable), we define on $\mathcal{I} \times \R_*$ the application
\begin{equation}
\begin{array}{cccccccc}
    \Phi: & \mathcal{I} \times \R_+ & \to & \R_{>0}^2 &\\
    &(r,t) & \mapsto  &  \pmtx \phi_1(r,t) \\ \phi_2(r,t)\pmtrx & =\mathbf{X}_r(t),
\end{array}
\end{equation}
where $\mathbf{X}_r$ is a solution of the system with initial condition $\mathbf{X}_r(0)= \pmtx r \\ E^*\pmtrx$. This allows us to define the so-called Poincaré map \cite{teschl2012ordinary}  associated with the system \eqref{eq:dEdt}-\eqref{eq:dRdt} and section $\mathcal{S}$, as
\begin{equation}
    \begin{array}{ccccccc}
    \mathcal{P}: & \mathcal{I} & \to & ]0;R^*[  \\
    & r  & \mapsto  & \phi_1(r,t_r)\label{eq:poincaremap}
\end{array}
\end{equation}
where $t_r = \inf\{t>0, \ \mathbf{X}_r(t) \in \mathcal{S}\}$ is the first return time to $S$ ($t_r$ exists by definition of $\mathcal{I}$).

\begin{lemma}\label{prop:Poincareprop}
The Poincaré map $\mathcal{P}$, as defined in equation \eqref{eq:poincaremap}, satisfies the following properties:
\begin{enumerate}[label=(\roman*)]
    \item $\mathcal{P}$ is $\mathcal{C}^1$ and $\exists \ 0< r_\mathcal{P}\leq R^*, \ \mathcal{I}= ]0,r_\mathcal{P}[$ or $\mathcal{I}= ]0,r_\mathcal{P}]$.
    \item $\mathcal{P}$ has no fixed point.
\end{enumerate}
\end{lemma}

\begin{proof}
\begin{enumerate}[label=(\roman*)]
    \item \label{prop:Poincareprop-i} Let $r_0 \in \mathcal{I}$ and $t_r$ its first return time, so $\phi_2(r_0,t_r)=E^*$ and $\phi_1(r_0,t_r)\in (0,R^*)$. Since the vector field is transverse to the nullclines away from $(R^*,E^*)$, $\partial_t\phi_2(r,t_r)\ne 0$. Since $\phi_2$ is $\mathcal{C}^1$, we can apply the implicit function theorem: There exists an open set $V$ containing $r_0$, an open set $\Omega$ containing $(r_0,t_r)$ and a $\mathrm{C}^1$ function $\xi$ such as : $$((r,t)\in \Omega \And \phi_2(r,t)=E^*) \iff (r\in V \And t=\xi(r) ).$$ 
    To conclude that $\mathcal{P}(r)=\phi_1(r,\xi(r))$, we would need to know that $\xi(r)$ is the first return time of $(r,E^*)$ to $S$. This can be shown, however, for $r$ in some open $U\subset V$ containing $r_0$, using continuity of the flow together with local injectivity of $(r,t)\mapsto \phi(r,t)$ near $(r_0,0)$. In particular, the domain of $\mathcal{P}$ is open, so $\mathcal{I}$ is an interval. Since $\phi_1$ and $\xi$ are $\mathcal{C}^1$ and $r_0 \in \mathcal{I}$ is arbitrary, it follows that $\mathcal{P}$ is $\mathcal{C}^1$ on $\mathcal{I}$.
    
    \item \label{prop:Poincareprop-ii}This is a consequence of the Bendixson-Dulac theorem \cite{dumortier_qualitative_2006}. Let's define $\nu(R,E)=\frac{1}{R}$, $f(R,E)=\frac{d R}{dt}$ and $g(R,E)=\frac{d E}{dt}$. Then,
        $$\frac{\partial(\nu f)}{\partial R}+\frac{\partial(\nu g)}{\partial E}=-(c_P \gamma_P'(E)(1-V) +c_R \gamma_R'(E)V) $$
    which is strictly negative (as we assume $\gamma_P$ and $\gamma_R$ being strictly increasing). Upon applying the Bendixson-Dulac theorem, we find the system has no nonconstant periodic solutions, and in particular, $\mathcal{P}$ admits no fixed point.
\end{enumerate}
\end{proof}
We can now characterize the behaviour of the iterated map, as

\begin{lemma}
Solutions $\mathbf{X}_r(t)$ of the system that start from $(r,E^*)$, where $r\in \mathcal{I}$, converge to  $(R^*,E^*)$.
\end{lemma}

\begin{proof} Using lemma \ref{prop:Poincareprop}, either $\forall r \in \mathcal{I}, \ \mathcal{P}(r)>r$, or $\forall r \in \mathcal{I}, \ \mathcal{P}(r)<r$. Let us consider again $r_\mathcal{P}$, as defined in lemma \ref{prop:Poincareprop}.

If $r_\mathcal{P} =  R^*$, then there exists $r_1$ such as $\mathcal{P}(r_1)>r_1$, since $(R^*,E^*)$ is locally asymptotically stable. Thus, for all $r\in\mathcal{I}$, the iterated sequence $\mathcal{P}^n(r)$ is strictly increasing, and it converges to $R^*$ (another limit of the sequence would be a fixed point of $\mathcal{P}$, which contradicts proposition \ref{prop:Poincareprop}), and $\mathbf{X}_r(t)$ converges to $(R^*,E^*)$. \\
If $r_\mathcal{P} <  R^*$, let $r_\mathcal{P}<r_0<R^*$. By taking the reversal trajectory $\Phi(r_0,-t)$, and using similar arguments as in lemma \ref{trajectories}, we can prove that there exists a first time $t>0$, such that $\phi_1(r_0,-t)\in \mathcal{I} $. By definition, $\mathcal{P}(\phi_1(r_0,-t))=r_0>\phi_1(r_0,-t)$, so $\mathcal{P}(r)>r$ for all $r\in \mathcal{I}$. In particular, there exists $n$ such that $\mathcal{P}^n(r) > r_\mathcal{P}$, i.e. $\mathcal{P}^n(r) \notin \mathcal{I}$. Using lemma \ref{trajectories}, $\mathbf{X}_r(t)$ thus converges to $(R^*,E^*)$ 
.\end{proof}

Combined with lemma \ref{trajectories}, this last lemma shows that $(R^*,E^*)$ is globally asymptotically stable. Finally, the second part of the proposition is a direct consequence of the stable manifold theorem for a fixed point \cite{guckenheimer_nonlinear_1983}, showing that if $\Delta\geq0$ the eigenvalues are real and there is a finite number of oscillations produced as the trajectories converge, and if $\Delta<0$, eigenvalues have a non zero imaginary part leading to an infinite number of oscillations around the equilibrium point.

\end{proof}

\newpage
\section{Convergence behavior}\label{appendix_2}
In this Appendix section, we study the behaviour of the autonomous system with constant $\alpha$, $U$ and $V$ such that conditions \eqref{eq:mass_conservation} and \eqref{eq:constraint2} hold.

\begin{proposition}\label{prop:oscillation}
    There exist two oscillation threshold functions $\alpha_{\osc}(V)$ and $U_{\osc}(\alpha,V)$ such that:
    \begin{enumerate}[label=(\roman*)]
        \item if $\alpha<\alpha_{\osc}(V)$, the system is infinitely oscillating around the steady state,
        \item if $\alpha\geq\alpha_{\osc}(V)$, the system is infinitely oscillating around the steady state if and only if $U>U_{osc}$.
    \end{enumerate}

\end{proposition}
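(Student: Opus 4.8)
The plan is to reduce the whole statement to the sign of the discriminant $\Delta$ of \eqref{eq:Delta}: by Proposition~\ref{prop:stability} the system oscillates infinitely often around $(R^*,E^*)$ precisely when $\Delta<0$, so it suffices to make $\Delta<0$ explicit under the Michaelis--Menten kinetics \eqref{eq:gamma_function} and read off the thresholds. Set $w:=\beta_R+U$, so that the admissible range is $0\le w<V\gamma_{R,\max}$ by \eqref{eq:constraint2}. Using $\gamma_i'(E)=\gamma_{i,\max}k_D/(E+k_D)^2$ and, from \eqref{E_star}, $E^*+k_D=k_D V\gamma_{R,\max}/(V\gamma_{R,\max}-w)$, the decisive observation is that because $\gamma_R$ and $\gamma_P$ share the same $k_D$, one has $\gamma_P'(E^*)/\gamma_R'(E^*)=\gamma_{P,\max}/\gamma_{R,\max}$, independent of $w$. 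Hence the coefficient $A:=c_R\gamma_R'(E^*)V+c_P\gamma_P'(E^*)(1-V)$ appearing in \eqref{eq:Delta} factors as $A=m(V)\,\gamma_R'(E^*)V$ with $m(V):=c_R+c_P\gamma_{P,\max}(1-V)/(\gamma_{R,\max}V)$ a positive function of $V$ only. Since $R^*=\alpha K(U,V)$ is proportional to $\alpha$ by \eqref{R_star}, substituting into $\Delta=A^2R^{*2}-4\alpha\gamma_R'(E^*)V$ and dividing by the positive factor $\alpha\gamma_R'(E^*)V$ turns $\Delta<0$ into the equivalent inequality $\alpha<\alpha_c(U,V)$, where $\alpha_c(U,V)=4/\bigl(m(V)^2\,\gamma_R'(E^*)V\,K(U,V)^2\bigr)$.

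I would then compute the two remaining pieces in terms of $w$: from \eqref{E_star}, $\gamma_R'(E^*)V=(V\gamma_{R,\max}-w)^2/(k_D V\gamma_{R,\max})$, and cancelling the factor $U=w-\beta_R$ in the recycling term of \eqref{R_star} gives $1/K(U,V)=(m(V)-c_U)\,w+c_U\beta_R$, a positive, nondecreasing affine function of $w$ --- positive because $c_U\le c_R\le m(V)$ by the mass--conservation bound \eqref{eq:mass_conservation}. Plugging in,
\[
\alpha_c(U,V)=\frac{4k_D V\gamma_{R,\max}}{m(V)^2}\left(\frac{(m(V)-c_U)\,w+c_U\beta_R}{V\gamma_{R,\max}-w}\right)^{\!2}.
\]
This is a $V$-only positive constant times the square of a ratio whose numerator is positive and nondecreasing in $w$ and whose denominator is positive and strictly decreasing in $w$; so $U\mapsto\alpha_c(U,V)$ is continuous, strictly increasing on $[0,V\gamma_{R,\max}-\beta_R)$, and tends to $+\infty$ as $w\uparrow V\gamma_{R,\max}$. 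Evaluating at $U=0$ (i.e.\ $w=\beta_R$) the $m(V)^2$ cancels, and I set
\[
\alpha_{\osc}(V):=\alpha_c(0,V)=\frac{4k_D V\gamma_{R,\max}\,\beta_R^2}{(V\gamma_{R,\max}-\beta_R)^2},
\]
the claimed threshold depending on $V$ alone.

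The two cases then follow at once. If $\alpha<\alpha_{\osc}(V)$, then $\alpha<\alpha_{\osc}(V)=\alpha_c(0,V)\le\alpha_c(U,V)$ for every admissible $U$, hence $\Delta<0$ and the system oscillates infinitely often, which is (i). If $\alpha\ge\alpha_{\osc}(V)$, then by continuity and strict monotonicity of $U\mapsto\alpha_c(U,V)$ (ranging from $\alpha_{\osc}(V)$ to $+\infty$) there is a unique $U_{\osc}=U_{\osc}(\alpha,V)\in[0,V\gamma_{R,\max}-\beta_R)$ with $\alpha_c(U_{\osc},V)=\alpha$, and then $\Delta<0\iff\alpha<\alpha_c(U,V)\iff U>U_{\osc}$, which is (ii); solving the displayed identity for $w$ (a linear equation after taking the positive square root) even yields $U_{\osc}$ in closed form. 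The only genuinely delicate points, and the ones I would check most carefully, are the sign and domain bookkeeping: that $1/K>0$ on the whole admissible range --- this is exactly where \eqref{eq:mass_conservation} is used --- that the numerator $(m(V)-c_U)w+c_U\beta_R$ does not vanish so that squaring preserves strict monotonicity, and that the blow-up of $\alpha_c$ really occurs at the right endpoint $w=V\gamma_{R,\max}$ rather than at an interior value; everything else is routine algebra.
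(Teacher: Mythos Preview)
Your proof is correct and follows essentially the same strategy as the paper: make $\Delta$ explicit under the Michaelis--Menten kinetics and read off the thresholds. The organization differs slightly. The paper solves $\Delta=0$ directly for $U$ to obtain the closed-form $U_{\osc}$, then imposes $U_{\osc}\ge 0$ to extract $\alpha_{\osc}$, and finally asserts that ``solving $\Delta(U)<0$ similarly leads to $U>U_{\osc}$''. You instead recast $\Delta<0$ as $\alpha<\alpha_c(U,V)$, establish that $U\mapsto\alpha_c(U,V)$ is strictly increasing from $\alpha_{\osc}(V)=\alpha_c(0,V)$ to $+\infty$, and deduce both (i) and (ii) at once from this monotonicity. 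Your route makes the ``if and only if'' in (ii) completely transparent (it is exactly strict monotonicity of $\alpha_c$), whereas the paper's one-line conclusion relies implicitly on the same fact; conversely, the paper's approach delivers the explicit formula for $U_{\osc}$ directly, while in yours it is obtained only after the linear inversion you mention at the end. Your $\alpha_{\osc}(V)$ coincides with the paper's expression \eqref{alpha_osc_part}.
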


\begin{proof} 
Solving $\Delta=0$ for $U$ with the values of $(R^*,E^*)$ from equations \eqref{eq:R*}-\eqref{eq:E*} leads to 
\begin{equation}\label{def_uosc}
       U= U_{\osc}:=\frac{C_0\alpha (\gamma_{R,\max}V-\beta_R)-k_D\beta_R( c_P\gamma_{P,\max}(1-V)+c_R\gamma_{R,\max}V)}{k_D( c_P\gamma_{P,\max}(1-V)+c_R\gamma_{R,\max}V)+C_0\alpha-k_D\gamma_{R,\max}Vc_U},
 \end{equation}
 where  $$C_0= \frac{k_D+E^*}{R^*}=\frac{c_P\gamma_{P,\max}k_D(1-V)+c_R \gamma_{R,\max}k_D V}{(4\alpha\gamma_{R,\max} k_D V)^{\frac{1}{2}}}.$$ For this solution to be physical, we need $U_{\osc} \geq 0$. Note that the denominator is always positive because of condition \ref{eq:mass_conservation}, so $U_{\osc} \geq 0$ yields
 
\begin{equation}\label{alpha_osc_part}
    \alpha > \alpha_{\osc}\coloneqq\frac{4\beta_R^2k_D\gamma_{R,\max}V}{(\gamma_{R,\max}V-\beta_R)^2}.
\end{equation}
In particular, if $\alpha \leq \alpha_{osc}$, $\Delta <0 $ and $(i)$ is proved. Assuming $\alpha > \alpha_{osc}$, solving $\Delta(U)<0$ similarly leads to $U> U_{\osc}$.
\end{proof}

\begin{proposition}\label{prop:Psi_variation}

Let $\alpha >0$ and $\Psi:=-\real(\lambda)$, where $\lambda$ is the eigenvalue with the largest real part of the Jacobian matrix of the system at $(R^*,E^*)$. 

\begin{enumerate}[label=(\roman*)]
\item Let $U_{osc}^*=U_{osc}(V=1)$ (as defined in Eq. \eqref{def_uosc}). If $U_{osc}^* >0$ and $U\leq  U_{osc}^*$, then $\Psi(V)$ is increasing  as $U_\osc(V)\leq U  $, and is decreasing for $U_\osc(V)>U $. Else, it is only increasing. 

\item
Let $V\in[0,1]$, then, if $\alpha \geq \alpha_\osc(V)$, $\Psi(U)$ is increasing as $U \leq U_\osc(V)$, and is decreasing for $U > U_\osc(V)$. Else, it is only decreasing.
\end{enumerate}

\end{proposition}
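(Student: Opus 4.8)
The plan is to exploit the explicit characteristic polynomial of the Jacobian at the equilibrium, $C[X]=X^2+bX+c$ (Eq.~\eqref{eq:charac_poly}), where $b:=(c_R\gamma_R'(E^*)V+c_P\gamma_P'(E^*)(1-V))R^*>0$ is minus the trace and $c:=\alpha\gamma_R'(E^*)V>0$ is the determinant, both smooth in $(U,V,\alpha)$. Since $\Psi=-\real(\lambda)$ for the eigenvalue of largest real part,
\[
\Psi=\begin{cases}\tfrac{b}{2},&\Delta:=b^2-4c<0,\\ \tfrac12\bigl(b-\sqrt{b^2-4c}\bigr),&\Delta\ge0,\end{cases}
\]
and on the second line $\Psi$ is the smaller of the two positive roots of $x\mapsto x^2-bx+c$; note $\Psi$ is continuous across $\{\Delta=0\}$ (both formulas give $b/2$) and that, by Proposition~\ref{prop:oscillation}, $\{\Delta=0\}=\{U=U_\osc(\alpha,V)\}$, which is where the monotonicity type should change. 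Accordingly I would (a) substitute the Michaelis--Menten steady state of Eqs.~\eqref{R_star}--\eqref{E_star} to make $b,c$ explicit, (b) prove monotonicity of $\Psi$ on each branch, and (c) glue at $U_\osc$ and locate the branches inside the admissible range.

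For (a), substituting $\gamma_i'(E^*)=\gamma_{i,\max}k_D/(E^*+k_D)^2$ and $\gamma_P'(E^*)=(\gamma_{P,\max}/\gamma_{R,\max})\gamma_R'(E^*)$ gives the compact forms $c=\frac{\alpha}{k_D}\frac{(\Gamma-\mu)^2}{\Gamma}$ and $b=c\,\frac{A}{m}$, where $\mu:=\beta_R+U$, $\Gamma:=\gamma_{R,\max}V$, $A:=c_R+\frac{c_P\gamma_{P,\max}(1-V)}{\gamma_{R,\max}V}$ and $m:=\mu(A-c_U)+c_U\beta_R$ is the denominator of $R^*$. Here the two biophysical constraints carry the argument: Eq.~\eqref{eq:constraint2} gives $\Gamma-\mu>0$ and Eq.~\eqref{eq:mass_conservation} ($c_U\le c_R$) gives $A-c_U>0$, hence $m>0$ and the identity $c/b=m/A=\mu-c_UU/A$. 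One then reads off: $(\Gamma-\mu)^2/\Gamma$ increases in $V$ and decreases in $U$; $m$ increases in $U$ (with $A$ fixed) and decreases in $V$; $A$ decreases in $V$. Thus $c$ decreases in $U$ and increases in $V$, while $c/b=m/A$ increases in $U$ and decreases in $V$, so $b=c\cdot A/m$ decreases in $U$ and increases in $V$. On the oscillatory branch $\Psi=b/2$, which therefore is decreasing in $U$ (part (ii)) and increasing in $V$ (part (i)).

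For (b) on the non-oscillatory branch, I would differentiate $\Psi^2-b\Psi+c=0$ in a parameter $\theta\in\{U,V\}$ (writing $b_\theta=\partial_\theta b$, $c_\theta=\partial_\theta c$) to get $\partial_\theta\Psi=(c_\theta-b_\theta\Psi)/(b-2\Psi)$, with $b-2\Psi>0$ there (smaller of two roots summing to $b$, $\Delta>0$ inside), so $\sgn(\partial_\theta\Psi)=\sgn(c_\theta-b_\theta\Psi)$. The key elementary bound is $\Psi>c/b$ (from $\Psi\Psi_+=c$ and $\Psi_+<b$ for the larger root $\Psi_+$). Since $c/b$ increases in $U$ and $b_U<0$, one gets $c_U/b_U\le c/b<\Psi$, hence $c_U-b_U\Psi>0$: $\Psi$ strictly increases in $U$ on $\{U\le U_\osc(V)\}$. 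Since $c/b$ decreases in $V$ and $b_V>0$, one gets $c_V/b_V\le c/b<\Psi$, hence $c_V-b_V\Psi<0$: $\Psi$ strictly decreases in $V$ on $\{U_\osc(V)>U\}$. For (c), continuity at $\Delta=0$ then gives: for part (ii), $\Psi$ increasing for $U\le U_\osc(V)$ and decreasing for $U>U_\osc(V)$, so maximal at $U_\osc(V)$ — and the "else" clause $\alpha<\alpha_\osc(V)$ is, by Proposition~\ref{prop:oscillation}, exactly $U_\osc(V)\le0$, leaving only the decreasing branch. For part (i), writing $\Delta=c\bigl(c(A/m)^2-4\bigr)$ with $c(A/m)^2$ increasing in $V$ shows $\{\Delta\ge0\}=\{U_\osc(\alpha,V)\ge U\}$ is an upper sub-interval $[V_\star,1]$ of the admissible range $\bigl((\beta_R+U)/\gamma_{R,\max},1\bigr]$ (it cannot meet the lower end, where $b,c\to0$ and $\Delta\to0^-$), so $\Psi$ increases on $\{U_\osc(V)\le U\}$ and decreases on $\{U_\osc(V)>U\}$; and $U_\osc^*=U_\osc(\alpha,1)$ is exactly what detects whether $[V_\star,1]$ is nonempty, so if $U_\osc^*\le0$ or $U>U_\osc^*$ the whole admissible interval is oscillatory and $\Psi$ is only increasing.

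The main obstacle I anticipate is the sign bookkeeping on the non-oscillatory branch in the $V$-direction: unlike the $U$-direction, where $b$ and $c$ are each monotone, neither is monotone in $V$ in general, so pinning down the sign of $c_V-b_V\Psi$ genuinely relies on the exact identity $c/b=\mu-c_UU/A$ together with $A$ decreasing in $V$; and proving that $c(A/m)^2$ is monotone in $V$ — so that $\{\Delta\ge0\}$ is a single sub-interval and "increasing-then-decreasing" is the complete description — is the step that validates the role of $U_\osc^*$. The degenerate case $c_UU=0$ (where $c/b$ is constant in $V$) and the limit at the constraint boundary $\gamma_{R,\max}V=\beta_R+U$ both need separate, short checks to place the threshold correctly.
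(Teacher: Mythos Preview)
Your approach is correct and shares its algebraic core with the paper's proof: both hinge on the identity $c/b=m/A=\mu-c_UU/A$ (the paper packages this as the auxiliary function $G=\text{const}\cdot b/c$ in the non-oscillatory case and computes its derivative explicitly), and both need that $\{\Delta\ge0\}$ is a single sub-interval in $V$ (the paper proves $U_\osc(V)$ is increasing directly from Eq.~\eqref{def_uosc}; you get the equivalent statement via monotonicity of $c(A/m)^2=b^2/c$). The execution differs: the paper writes $\Psi=\tfrac{b}{2}(1-F)$ with $F=\sqrt{\Delta}/b$, uses the identity $\Psi(1+F)=2c/b$, and combines the signs of $G'$ and $F'$ to extract the sign of $\Psi'$; you instead implicitly differentiate $\Psi^2-b\Psi+c=0$ and combine the root bound $\Psi>c/b$ (from $\Psi\Psi_+=c$, $\Psi_+<b$) with the monotonicity of $c/b$ to read off $\sgn(\partial_\theta\Psi)$ directly. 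Your route is somewhat cleaner --- in particular it avoids the paper's case split on the sign of $c_R\gamma_{R,\max}-c_P\gamma_{P,\max}$ when showing $b$ increases in $V$, since you get this for free from $b=c\cdot(A/m)$ as a product of two positive increasing factors. One small correction to your obstacles paragraph: contrary to what you write there, both $b$ and $c$ \emph{are} monotone in $V$ under the standing constraints (you in fact established this earlier in your own sketch), and your argument for the sign of $c_V-b_V\Psi$ silently uses $b_V>0$ to pass from $(c/b)_V\le0$ to $c_V/b_V\le c/b$ --- so that monotonicity is load-bearing, not merely the identity for $c/b$.
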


\begin{proof}   We first show that  $U_{\osc}(V)$ is increasing. We first rewrite $U_{\osc}$ as
\beqq
    U_{\osc}=\frac{\beta_R\left(\frac{\alpha}{\alpha_{\osc}}\right)^{\frac{1}{2}}-\beta_R}{1+\frac{\alpha^{\frac{1}{2}}}{(4 k_D \gamma_{R,\max}V)^{\frac{1}{2}}}-\frac{\gamma_{R,\max}Vc_U}{c_P\gamma_{P,\max}(1-V)+c_R\gamma_{R,\max}V}}.
\eeqq
Using $ -\frac{\alpha^{\frac{1}{2}}}{4( k_D \gamma_{R,\max})^{\frac{1}{2}}}V^{\frac{-3}{2}}-\frac{c_P\gamma_{P,\max}\gamma_{R,\max}c_U}{(c_P\gamma_{P,\max}(1-V)+c_R\gamma_{R,\max}V)^2}<0$,
we obtain that $U_{\osc}$ is an increasing function of $V$.

\noindent
Let's compute the function $\Psi$, using the characteristic polynomial \eqref{eq:charac_poly} and the $\gamma_{i,\max}$ defined in \eqref{eq:gamma_function}. If $\Delta\leq0 $, both roots of the polynomial have the same real value, thus :
\begin{equation}
    \Psi=R^*\frac{c_R\gamma_R'(E^*)V+c_P\gamma_P'(E^*)(1-V)}{2}=R^{*}\frac{c_R \gamma_{R,\max}k_D V + c_P\gamma_{P,\max}k_D(1-V)}{2(k_D+E^*)^2}.
\end{equation}
If $\Delta>0$,
\begin{equation}
    \begin{split}
        \Psi&=\frac{1}{2}\left[R^*(c_R\gamma_R'(E^*)V+c_P\gamma_P'(E^*)(1-V))-\sqrt{\Delta}\right]\\
        &=R^{*}\frac{c_R \gamma_{R,\max}k_D V + c_P\gamma_{P,\max}k_D(1-V)}{2(k_D+E^*)^2}\left(1-\sqrt{1-4\frac{\alpha  \gamma_{R,\max}k_D V(k_D+E^*)^2}{R^{*2}(c_P\gamma_{P,\max}k_D(1-V) +c_R \gamma_{R,\max}k_D V )^2}}\right).
    \end{split}
\end{equation}
We now prove $(i)$ and $(ii)$

\begin{enumerate}[label=(\roman*)]

\item To study $\Psi(V)$, we distinguish the two cases:
\begin{itemize}
    \item If $U\geq U_{\osc}(V)$, then $\Delta\leq0$ so $\Psi=\frac{\alpha f(V)}{(k_D+E^*)^2}$, where $f(V) = \frac{1}{\alpha} R^* (c_P\gamma_{P,\max}(1-V) +c_R \gamma_{R,\max} V)$. Note first, from equation \eqref{eq:E*}, that $(k_D+E^*)^2$ is decreasing with respect to $V$. To study $f(V)$, we first note that if $c_R \gamma_{R,\max}\geq c_P \gamma_{P,\max}$, then $f$ is increasing. Else, we derive
    $$
        \frac{df}{dV}=
         V^2 (c_R \gamma_{R,\max}-c_P \gamma_{P,\max})((c_R \gamma_{R,\max}-c_P \gamma_{P,\max})(\beta_R+U)-c_U U)+c_P^2\frac{\gamma_{P,\max}^2}{\gamma_{R,\max}}(\beta_R+U).
    $$
    In particular, if $c_R \gamma_{R,\max}\geq c_P \gamma_{P,\max}$, then $\frac{df}{dV}>0$, so in both cases,  we find that $f$ is increasing.
    
\item If $U< U_{\osc}$, then 
$$\Psi(V)=R^{*}\frac{c_P\gamma_{P,\max}k_D(1-V) +c_R \gamma_{R,\max}k_D V }{2(k_D+E^*)^2}\left(1-F(V)\right),$$
where
\begin{equation}\label{def:F_psi}
    F(V) = \sqrt{1-4\frac{\alpha  \gamma_{R,\max}k_D V(k_D+E^*)^2}{R^{*2}(c_P\gamma_{P,\max}k_D(1-V) +c_R \gamma_{R,\max}k_D V )^2}}.
\end{equation}
Similarly as in the previous case, we can show that $F(V)$ is increasing. We next introduce 
$$G(V)=\frac{R^{*}(c_P\gamma_{P,\max}(1-V) +c_R \gamma_{R,\max} V )}{\alpha V} = C \frac{1}{\Psi(V)(1+F(V))},$$
where $C$ is a positive constant. Upon differentiating $G$, we obtain

\begin{equation*}
    \frac{dG}{dV}=\frac{c_P\gamma_{P,\max}c_UU}{(c_P\frac{\gamma_{P,\max}}{\gamma_{R,\max}}(\beta_R+U)+(c_R(\beta_r+U)-c_U U-c_P\frac{\gamma_{P,\max}}{\gamma_{R,\max}}(\beta_R+U))V)^2} \geq0,
\end{equation*}
so $\Psi' (1+F) +  F'\Psi \leq 0.$ Using that $F'$, $\Psi$ and $(1+F) \geq 0$, we conclude that $\Psi$ is  decreasing.
\end{itemize}

\item To study $\Psi(U)$, we distinguish the two cases
\begin{itemize}
\item If $U> U_{\osc}$, $\Psi=R^{*}\frac{c_P\gamma_{P,\max}k_D(1-V) +c_R \gamma_{R,\max}k_D V }{2(k_D+E^*)^2} = Cf(U)$, where $C>0$ and
$$f(U) = (\gamma_{R,\max}k_D V)^2\frac{R^{*}}{(k_D+E^*)^2}=\frac{\alpha(\gamma_{R,\max} V-(\beta_R+U))^2}{c_R(\beta_r+U)+c_P\frac{\gamma_{P,\max}(1-V)}{\gamma_{R,\max}V}(\beta_R+U)-c_U U}.$$
In particular, we have 
$$\frac{df}{dU} = \frac{((U+\beta_R)^2- (\gamma_{R,\max} V)^2)(c_R+c_P\frac{\gamma_{P,\max}(1-V)}{\gamma_{R,\max}V}-c_U)+2 (U+\beta_R- \gamma_{R,\max} V)\beta_R c_U}{(c_R(\beta_r+U)+c_P\frac{\gamma_{P,\max}(1-V)}{\gamma_{R,\max}V}(\beta_R+U)-c_U U)^2}.$$
Using \eqref{eq:mass_conservation} and \eqref{eq:constraint2}, $\frac{df}{dU}<0$ so $\Psi$ is a decreasing function of $U$.

\item If $U< U_{\osc}$, $
\Psi(U)=R^{*}\frac{c_P\gamma_{P,\max}k_D(1-V) +c_R \gamma_{R,\max}k_D V }{2(k_D+E^*)^2}\left(1-F(U)\right),$
where $F(U)$ is the same function as in equation \eqref{def:F_psi}. Note that
\beqq
  \frac{k_D+E^*}{R^*}=\frac{k_D V\gamma_{R,\max}}{\alpha(V\gamma_{R,\max}-(\beta_R+U))}\left(\frac{c_P\gamma_{P,\max}(1-V)+c_R\gamma_{R,\max}V}{\gamma_{R,\max}V}(\beta_R+U)-c_U U\right)
\eeqq
is increasing because of the assumption \eqref{eq:mass_conservation}, so $F(U)$ is decreasing. Similarly as in $(i)$, we can show that $(1+F)\Psi' +F'\Psi \geq 0$, and conclude that $\Psi(U)$ is increasing.
\end{itemize}
\end{enumerate}

\end{proof}

\begin{lemma}\label{lemma_lower_bound}
Let $(U,V)\in \left\{U_{\min} , U_{\max} \right\}\times  \left\{ V_{\min} , V_{\max} \right\}$, then
\beqq
\Psi(U,V) \geq \min \left( \Psi(U_{\min},V_{\max}),\Psi(U_{\max},V_{\min})\right).
\eeqq
\end{lemma}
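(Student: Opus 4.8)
The plan is to reduce the statement to two non‑trivial comparisons and to settle each one from the monotonicity description of $\Psi$ in Proposition \ref{prop:Psi_variation}. Among the four corners $(U_{\min},V_{\min}),(U_{\min},V_{\max}),(U_{\max},V_{\min}),(U_{\max},V_{\max})$ of the admissible rectangle, the two values $\Psi(U_{\min},V_{\max})$ and $\Psi(U_{\max},V_{\min})$ are trivially $\ge\min\big(\Psi(U_{\min},V_{\max}),\Psi(U_{\max},V_{\min})\big)$, so it suffices to establish the same bound for $\Psi(U_{\min},V_{\min})$ and for $\Psi(U_{\max},V_{\max})$. The geometric picture driving both is that the locus $\Delta=0$ is, as obtained in the proof of Proposition \ref{prop:oscillation}, the graph $U=U_\osc(V)$, and this curve is increasing in $V$ (a fact shown at the start of the proof of Proposition \ref{prop:Psi_variation}); it plays the role of a ridge of $\Psi$, in the sense that by Proposition \ref{prop:Psi_variation}(i) the map $V\mapsto\Psi(U,V)$ is non‑decreasing on $\{V\colon U_\osc(V)\le U\}$ and non‑increasing on $\{V\colon U_\osc(V)>U\}$, and by Proposition \ref{prop:Psi_variation}(ii) the map $U\mapsto\Psi(U,V)$ is non‑decreasing on $\{U\le U_\osc(V)\}$ and non‑increasing on $\{U>U_\osc(V)\}$ (the ``only increasing''/``only decreasing'' alternatives being the sub‑cases where one of these half‑lines meets the physical range trivially). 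Since $U_\osc$ is increasing, each of the corners $(U_{\min},V_{\min})$ and $(U_{\max},V_{\max})$ lies on a definite side of the ridge, and from it one can walk toward a neighbouring corner along whichever coordinate direction keeps $\Psi$ monotone in the favourable sense.

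Concretely, for $(U_{\min},V_{\min})$ I would split on the sign of $U_{\min}-U_\osc(V_{\min})$. If $U_{\min}\le U_\osc(V_{\min})$, then monotonicity of $U_\osc$ gives $U_\osc(V)\ge U_{\min}$ for every $V\in[V_{\min},V_{\max}]$, so Proposition \ref{prop:Psi_variation}(i) makes $V\mapsto\Psi(U_{\min},V)$ non‑increasing on $[V_{\min},V_{\max}]$, hence $\Psi(U_{\min},V_{\min})\ge\Psi(U_{\min},V_{\max})$. If instead $U_{\min}>U_\osc(V_{\min})$, then $U>U_\osc(V_{\min})$ for every $U\in[U_{\min},U_{\max}]$, so Proposition \ref{prop:Psi_variation}(ii) makes $U\mapsto\Psi(U,V_{\min})$ non‑increasing on $[U_{\min},U_{\max}]$, hence $\Psi(U_{\min},V_{\min})\ge\Psi(U_{\max},V_{\min})$. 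In either case $\Psi(U_{\min},V_{\min})\ge\min\big(\Psi(U_{\min},V_{\max}),\Psi(U_{\max},V_{\min})\big)$.

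The corner $(U_{\max},V_{\max})$ is handled by the mirror‑image argument, splitting on the sign of $U_{\max}-U_\osc(V_{\max})$. If $U_{\max}\le U_\osc(V_{\max})$, then $U\le U_\osc(V_{\max})$ for all $U\in[U_{\min},U_{\max}]$, so $U\mapsto\Psi(U,V_{\max})$ is non‑decreasing there and $\Psi(U_{\max},V_{\max})\ge\Psi(U_{\min},V_{\max})$. If $U_{\max}>U_\osc(V_{\max})$, then $U_\osc(V)\le U_\osc(V_{\max})<U_{\max}$ for all $V\in[V_{\min},V_{\max}]$, so $V\mapsto\Psi(U_{\max},V)$ is non‑decreasing there and $\Psi(U_{\max},V_{\max})\ge\Psi(U_{\max},V_{\min})$. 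Combining the two corners with the two trivial ones yields the claimed inequality for all $(U,V)\in\{U_{\min},U_{\max}\}\times\{V_{\min},V_{\max}\}$.

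The main (and essentially only) obstacle is bookkeeping rather than computation: in each sub‑case one must verify that the relevant branch of Proposition \ref{prop:Psi_variation} applies — e.g. that $U_{\min}\le U_\osc(V_{\min})$ forces $\alpha\ge\alpha_\osc(V_{\min})$ and $0\le U_{\min}\le U_\osc(1)=U_{osc}^*$ so that branch (i) is in force, while the complementary regime falls into the ``only decreasing'' alternative, which a fortiori gives the required non‑increasing behaviour — and that the degenerate equalities ($U_\osc(1)=0$, $V_{\min}=V_{\max}$, or a corner lying exactly on the ridge) collapse to trivialities. Because $\Psi$ is continuous and positive throughout $\mathcal{D}$ by constraint \eqref{eq:constraint2}, all these quantities are well defined and the monotonicity statements apply verbatim, so I expect this branch‑matching to be routine once the case split above is fixed.
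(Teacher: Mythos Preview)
Your proposal is correct and follows essentially the same approach as the paper: both split on whether the point lies below or above the ridge $U=U_{\osc}(V)$ and then invoke the monotonicity of $\Psi$ in each coordinate from Proposition~\ref{prop:Psi_variation}, together with the fact that $U_{\osc}$ is increasing in $V$. The only cosmetic difference is that the paper's two-line argument moves in both coordinates at once (reaching $(U_{\min},V_{\max})$ when $U\le U_{\osc}(V)$ and $(U_{\max},V_{\min})$ otherwise, which in fact establishes the bound for every $(U,V)$ in the rectangle), whereas you treat the four corners individually and move along a single coordinate to an adjacent target corner; the underlying idea and the ingredients used are the same.
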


\begin{proof}
If $U\leq U_{\osc}$, and since $U_{\osc}(V)$ is increasing, then proposition \ref{prop:Psi_variation} yields $\Psi(U,V) \geq\Psi(U_{min},V_{max})$. Else, we also conclude from proposition \ref{prop:Psi_variation} that $\Psi(U,V) \geq\Psi(U_{\max},V_{\min})$.
    
\end{proof}

\newpage
\section{Behaviour of the objective function}\label{appendix_3}
In this Appendix section, we prove the following results that are used in section \ref{sec:results_oc} of the main text.

\begin{proposition}\label{prop:rho_star}
Let $\rho^*$ be the protein production rate value at $(R^*,E^*)$, so that
\begin{equation}
\rho^*=\frac{\alpha\gamma_{P,\max}(\beta_R+U)(1-V)}{\gamma_{R,\max}V(c_R(\beta_r+U)+\frac{c_P\gamma_{P,\max}(1-V)(\beta_R+U)}{\gamma_{R,\max}V}-c_U U)}.\label{eq:rho_star}
\end{equation}
Then, $\rho^*$ is an increasing function of $U$, and decreasing function of $V$.
\end{proposition}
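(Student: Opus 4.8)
The plan is to reduce the statement to elementary one-variable monotonicity after clearing the nested fraction in \eqref{eq:rho_star}. Multiplying numerator and denominator of \eqref{eq:rho_star} by $\gamma_{R,\max}V$ rewrites it as
\begin{equation*}
\rho^* \;=\; \frac{\alpha\gamma_{P,\max}(\beta_R+U)(1-V)}{\gamma_{R,\max}V\big(c_R(\beta_R+U)-c_U U\big) + c_P\gamma_{P,\max}(\beta_R+U)(1-V)}.
\end{equation*}
Before differentiating anything I would record the key positivity fact: $c_R(\beta_R+U)-c_U U = c_R\beta_R + (c_R-c_U)U$, which is $\ge c_R\beta_R>0$ by the mass-conservation inequality \eqref{eq:mass_conservation} (i.e.\ $c_U\le c_R$). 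Combined with $V>0$, $0\le 1-V$ and positivity of the remaining constants, this makes the denominator strictly positive on the admissible domain (where \eqref{eq:constraint2} holds, so that $(R^*,E^*)\in\R_{>0}^2$); hence $\rho^*$ is well defined and positive, which is what legitimizes the quotient manipulations below.

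For the dependence on $U$, I would substitute $w:=\beta_R+U$, so that $w>0$ increases with $U$ and $c_R(\beta_R+U)-c_U U = (c_R-c_U)w + c_U\beta_R$; both numerator and denominator are then homogeneous of degree $1$ in $w$, and dividing through by $w$ gives
\begin{equation*}
\rho^* \;=\; \frac{\alpha\gamma_{P,\max}(1-V)}{\gamma_{R,\max}V(c_R-c_U) + c_P\gamma_{P,\max}(1-V) + \dfrac{\gamma_{R,\max}c_U\beta_R V}{w}}.
\end{equation*}
The only $w$-dependent term in the denominator, $\gamma_{R,\max}c_U\beta_R V/w$, is non-negative and decreasing in $w$ (strictly so once $c_U>0$), so the denominator decreases and $\rho^*$ increases with $U$.

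For the dependence on $V$, I would keep $U$ fixed and write the cleared denominator as $AV+B(1-V)$ with $A:=\gamma_{R,\max}\big(c_R\beta_R+(c_R-c_U)U\big)>0$ and $B:=c_P\gamma_{P,\max}(\beta_R+U)>0$, and the numerator as $K(1-V)$ with $K:=\alpha\gamma_{P,\max}(\beta_R+U)>0$. Then
\begin{equation*}
\frac{K}{\rho^*} \;=\; \frac{AV+B(1-V)}{1-V} \;=\; B + \frac{AV}{1-V},
\end{equation*}
and since $A>0$ and $V\mapsto V/(1-V)$ is strictly increasing on $[0,1)$, the map $V\mapsto 1/\rho^*$ is strictly increasing, hence $\rho^*$ is strictly decreasing in $V$ on the relevant range $V\in(0,1)$.

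Everything here is routine algebra; the point that needs care — and the one I would regard as the crux — is the sign bookkeeping around $c_R(\beta_R+U)-c_U U$: this is exactly where the physical constraint \eqref{eq:mass_conservation} is genuinely used (to guarantee both $\rho^*>0$ and $A>0$), and it is also the place where one must separate strict monotonicity from the degenerate regime $c_U=0$, in which $\rho^*$ turns out to be independent of $U$.
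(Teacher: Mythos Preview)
Your proof is correct. The $V$-monotonicity argument is essentially identical to the paper's: both rewrite $\rho^*$ so that the only $V$-dependence sits in a factor $V/(1-V)$ multiplied by $c_R\beta_R+(c_R-c_U)U>0$, and invoke mass conservation \eqref{eq:mass_conservation} for the sign.

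The $U$-monotonicity is where you diverge slightly. The paper differentiates $\rho^*(U)$ directly via the quotient rule and reads off $\partial_U\rho^*\propto c_U\beta_R/(\cdot)^2>0$. You instead perform the substitution $w=\beta_R+U$ and divide through by $w$, reducing the $U$-dependence to the single term $\gamma_{R,\max}c_U\beta_R V/w$ in the denominator. Both are one-line arguments of the same strength; yours has the small advantage that the degenerate case $c_U=0$ (constant in $U$, not strictly increasing) is visible without computing anything, and that no calculus is needed. The paper's derivative, conversely, makes the exact rate of increase explicit, which could be reused elsewhere. Neither approach dominates; they are interchangeable here.
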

\begin{proof}
Differentiating Eq. \eqref{eq:rho_star} with respect to $U$ yields

\begin{equation}
    \frac{d \rho^*}{dU}=\frac{\alpha\gamma_{P,\max}(1-V)}{\gamma_{R,\max}V} \frac{c_U\beta_r }{(c_R(\beta_r+U)+\frac{c_P\gamma_{P,\max}(1-V)(\beta_R+U)}{\gamma_{R,\max}V}-c_U U)^2}>0, 
\end{equation}
so $\rho^*(U)$ is increasing. On the other hand, we can re-write Eq. \eqref{eq:rho_star} as
\begin{equation*}
       \rho^* =\frac{\gamma_{P,\max}\alpha(\beta_R+U)}{\gamma_{R,\max}}\frac{1}{\frac{V}{1-V}(c_R\beta_r+(c_R-c_U)U)+c_P\frac{\gamma_{P,\max}}{\gamma_{R,\max}}(\beta_R+U)}.
\end{equation*}
Since  $x\mapsto \frac{x}{1-x}$ is strictly increasing on $[0,1]$ and $(c_R\beta_r+(c_R-c_U)U)> 0$ (as a result of the mass conservation \eqref{eq:mass_conservation}t), we conclude that $\rho^*$ is a strictly decreasing function of $V$.
\end{proof}

We next prove proposition \ref{prop:cst_ctrl} (stated here again for clarity):

\begin{proposition*}
Let $\rho^*(U,V, \alpha)$ be the value of the production rate function $\rho$ (Eq. \eqref{eq:rho}) at the steady state $(R^*,E^*)$ associated with $(U,V, \alpha)$ (Eq. \eqref{eq:R*}-\eqref{eq:E*}), and assume that the quasi-static approximation holds (Eq. \eqref{eq:Hypothesis_char_time}). Then the optimal control problem Eq. \eqref{eq:optimal_formulation} admits a solution if and only if the equation 

\begin{equation*}
\rho^*(U,V,\alpha_{\max}) = \rho^*(U_{\max}, V_{\min},\alpha_{\min})
\end{equation*}
admits a solution $(U_s,V_s)$. In this case,
\begin{equation*}
    (\hat{U},\hat{V})(t) = \left\{
    \begin{array}{ll}
        (U_{\max},V_{\min})  & \mbox{if  } 0\leq t<\frac{\tau}{2}\\
        (U_s,V_s) & \mbox{if } \frac{\tau}{2} \leq t< \tau
    \end{array}
\right.
\end{equation*}
is optimal for Eq. \eqref{eq:optimal_formulation}, and solutions of Eq. \eqref{eq:usvs} define a 1D-manifold $\Gamma=[U_{\min},U_{\max}]\times[V_{\min},V_{\max}]\cap\{(\Tilde{F}(V),V), V\in[0,1]\}$, such that $\Tilde{F}$ is an increasing function of $V$. 
\end{proposition*}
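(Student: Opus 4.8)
The plan is to reduce the optimization problem \eqref{eq:optimal_formulation} over arbitrary admissible control functions to a purely algebraic problem about the steady-state production rate $\rho^*$, using the quasi-static approximation \eqref{eq:Hypothesis_char_time}. Under that approximation, at each point in time the system is effectively at the steady state $(R^*,E^*)$ corresponding to the current values of $(U,V,\alpha)$, so the production rate $\rho$ is well-approximated by $\rho^*(U(t),V(t),\alpha(t))$. Since $\alpha(t)$ takes the value $\alpha_{\max}$ on $[0,\tau/2)$ and $\alpha_{\min}$ on $[\tau/2,\tau)$, the constraint ``$\rho(U,V)(t)$ is constant'' forces $\rho^*(U(t),V(t),\alpha_{\max})$ on the first half-period to equal $\rho^*(U(t),V(t),\alpha_{\min})$ on the second half-period; and since nothing else varies within each half-period, the optimal controls can be taken constant on each half-period, reducing the problem to choosing two pairs $(U_1,V_1)$ (for high intake) and $(U_2,V_2)$ (for low intake) with $\rho^*(U_1,V_1,\alpha_{\max}) = \rho^*(U_2,V_2,\alpha_{\min})$, maximizing this common value.

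Next I would invoke Proposition \ref{prop:rho_star}: $\rho^*$ is increasing in $U$ and decreasing in $V$, and (by inspection of \eqref{eq:rho_star}) increasing in $\alpha$. Therefore, to maximize the common value of $\rho^*$ under low intake, one wants $U_2$ as large as possible and $V_2$ as small as possible, i.e. $(U_2,V_2)=(U_{\max},V_{\min})$; this pins down the target value to $\rho^*(U_{\max},V_{\min},\alpha_{\min})$. Then the high-intake pair $(U_1,V_1)=(U_s,V_s)$ must be an admissible solution of the matching equation \eqref{eq:usvs}, and such a pair exists if and only if that equation has a solution in $\mathcal{D}$; this gives the ``if and only if'' and the stated form of $(\hat U,\hat V)$. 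I would also need to check feasibility: since $\alpha_{\max}>\alpha_{\min}$ and $\rho^*$ is increasing in $\alpha$, while on the high-intake side we can decrease $U_1$ and increase $V_1$ to bring $\rho^*(\cdot,\alpha_{\max})$ down, the matching equation can in general be solved, and its solution set is what we must characterize.

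For the manifold statement, I would fix $\alpha=\alpha_{\max}$ and solve \eqref{eq:usvs} for $U$ as a function of $V$: from \eqref{eq:rho_star}, setting $\rho^*(U,V,\alpha_{\max})$ equal to the constant $\rho_0 := \rho^*(U_{\max},V_{\min},\alpha_{\min})$ is, after clearing denominators, an equation that is \emph{affine} in $U$ (the numerator is linear in $U$ and, crucially, the combination $(\beta_R+U)/(\text{denominator})$ rearranges to something linear in $U$), so it has a unique solution $U=\Tilde{F}_{\alpha_{\max}}(V)$ given by an explicit rational expression — this is the formula referenced as \eqref{eq:tildeF}. Monotonicity of $\Tilde{F}_{\alpha_{\max}}$ then follows from Proposition \ref{prop:rho_star}: along the level set $\rho^*(U,V,\alpha_{\max})=\rho_0$, since $\partial_U\rho^*>0$ and $\partial_V\rho^*<0$, the implicit function theorem gives $\Tilde{F}'_{\alpha_{\max}}(V) = -(\partial_V\rho^*)/(\partial_U\rho^*) > 0$. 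Intersecting the graph $\{(\Tilde{F}_{\alpha_{\max}}(V),V)\}$ with the box $[U_{\min},U_{\max}]\times[V_{\min},V_{\max}]$ yields the 1D-manifold $\Gamma$, and constancy of $\rho$ along $\Gamma$ is immediate since every point of $\Gamma$ realizes the same value $\rho_0$.

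The main obstacle I anticipate is the rigorous justification of the quasi-static reduction — i.e. arguing that under \eqref{eq:Hypothesis_char_time} the true optimum over arbitrary admissible time-dependent controls is attained (up to the approximation) by controls that are constant on each half-period and equal to the steady-state-optimal values. The exact-optimality statement hides a limiting argument: strictly speaking $\rho$ is only approximately constant and approximately equal to $\rho^*$, so some care is needed about whether ``admits a solution'' refers to the idealized quasi-static problem or the original one; I would state the result for the quasi-static (singular-perturbation) limit and lean on the fast convergence guaranteed by the lower bound \eqref{eq:lower_bound} to control the error. The remaining steps — the monotonicity inputs from Proposition \ref{prop:rho_star}, solving an affine equation, and one application of the implicit function theorem — are routine.
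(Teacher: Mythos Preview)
Your proposal is correct and follows essentially the same route as the paper: reduce via the quasi-static assumption to a steady-state problem, use the monotonicity of $\rho^*$ in $U$ and $V$ (Proposition~\ref{prop:rho_star}) to pin the low-intake control at $(U_{\max},V_{\min})$, then solve the affine matching equation at $\alpha_{\max}$ to obtain $U=\tilde F_{\alpha_{\max}}(V)$ and deduce its monotonicity. The only cosmetic difference is that you phrase the monotonicity of $\tilde F$ via the implicit function theorem while the paper argues it directly from constancy of $\rho^*(\tilde F(V),V,\alpha_{\max})$; and your worry about the rigor of the quasi-static reduction is well placed but matches the paper, which likewise treats \eqref{eq:Hypothesis_char_time} as a standing assumption rather than deriving an approximation theorem.
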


\begin{proof}
When $\alpha=\alpha_{\min}$, proposition \ref{prop:rho_star} gives us the values for $U$ and $V$ that maximize $\rho^*$, as $U=U_{\max},V=V_{\min}$. The associated production rate is then
\beqq\label{low_prod_rate}
\rho_{0}\coloneqq\frac{\alpha_{\min}\gamma_{P,\max}(\beta_R+U_{\max})(1-V_{\min})}{\gamma_{R,\max}V_{\min}(c_R(\beta_r+U_{\max})+\frac{c_P\gamma_{P,\max}(1-V_{\min})(\beta_R+U_{\max})}{\gamma_{R,\max}V_{\min}}-c_U U_{\max})}.
\eeqq
When $\alpha=\alpha_{\max}$, and for the protein production to stay constant, the controls $U,V$ are then solutions of 
\beqq
\frac{\alpha_{\max}\gamma_{P,\max}(\beta_R+U)(1-V)}{\gamma_{R,\max}V(c_R(\beta_r+U)+\frac{c_P\gamma_{P,\max}(1-V)(\beta_R+U)}{\gamma_{R,\max}V}-c_U U)}=\rho_{0}.
\eeqq
We thus have to solve the following non-linear system
\begin{equation*}
    \begin{split}
    U\left(c_P\gamma_{P,\max}\rho_{0}-\alpha_{\max}\gamma_{P,\max}+V\left[\gamma_{R,\max}c_R\rho_{0}-c_P\gamma_{P,\max}\rho_{0}-\gamma_{R,\max}c_U\rho_{0}+\alpha_{\max}\gamma_{P,\max}\right]\right)\\
    =-V\beta_R\left[\gamma_{R,\max}c_R\rho_{0}-c_P\gamma_{P,\max}\rho_{0}+\alpha_{\max}\gamma_{P,\max}\right] -c_P\gamma_{P,\max}\beta_R\rho_{0}+\alpha_{\max}\gamma_{P,\max}\beta_R
\end{split}
\end{equation*}
Note that the term multiplying $U$ on the left hand side of the equation is not equal to 0. Otherwise, it would lead to have a value of $V$ that is not compatible with $\gamma_{R,\max}c_U\rho_{0}>0$. The following function $\Tilde{F}_{\alpha}(V)$ is hence well defined for $\alpha=\alpha_{\max}$:
\beq\label{eq:tildeF}
\Tilde{F}_{\alpha}(V)=\frac{-V\beta_R\left[\gamma_{R,\max}c_R\rho_{0}-c_P\gamma_{P,\max}\rho_{0}+\alpha\gamma_{P,\max}\right] -c_P\gamma_{P,\max}\beta_R\rho_{0}+\alpha\gamma_{P,\max}\beta_R}{c_P\gamma_{P,\max}\rho_{0}-\alpha\gamma_{P,\max}+V\left[\gamma_{R,\max}c_R\rho_{0}-c_P\gamma_{P,\max}\rho_{0}-\gamma_{R,\max}c_U\rho_{0}+\alpha\gamma_{P,\max}\right]},
\eeq
and the optimal control values  are solutions of $U=\Tilde{F}_{\alpha_{\max}}(V)$. Since $\rho^*$ is an increasing function in $U$, a decreasing function in $V$ and since $\rho^*(\Tilde{F}_{\alpha_{\max}}(V),V,\alpha_{\max})$ is constant for the set of admissible parameters, we deduce that $\Tilde{F}_{\alpha_{\max}}(V)$ is increasing. Note that with the same considerations, we deduce that $\Tilde{F}_{\alpha_{\max}}$ is a decreasing function of $\alpha_{\max}$. Moreover, by construction and because $\alpha_{\max}>\alpha_{\min}$, we know that $\Tilde{F}_{\alpha_{\max}}(V_{\min})<U_{\max}$. In particular, there exists a constant control if and only if $\Tilde{F}_{\alpha_{\max}}(V_{\max})\geq U_{\min}$
\end{proof}

\newpage
\section{Optimal control with one fixed control }\label{appendix_4}
In this Appendix section, we detail the analytical calculations to support the results from section \ref{sec:constraints}. The protein production rate at steady state is 
\begin{equation}
    \rho^*(U,V,\alpha)=\frac{\gamma_{P,\max}}{\gamma_{R,\max}}(U+\beta_R)\frac{1-V}{V}\frac{\alpha}{c_R(\beta_r+U)+\frac{c_P\gamma_{P,\max}(1-V)(\beta_R+U)}{\gamma_{R,\max}V}-c_U U}.
\end{equation}
Assuming $0<V<1$, having a constant protein production over the two periods of the control (Eq. \eqref{eq:constraints_rho}) is equivalent to
\begin{equation}
    \alpha_{\min}=\alpha_{\max}\frac{(1-V_{\max})V_{\min}(c_R+c_P\frac{\gamma_{P,\max}(1-V_{\min})}{\gamma_{R,\max}V_{\min}}-c_U \frac{U}{U+\beta_R})}{(1-V_{\min})V_{\max}(c_R+c_P\frac{\gamma_{P,\max}(1-V_{\max})}{\gamma_{R,\max}V_{\max}}-c_U \frac{U_{\min}}{U_{\min}+\beta_R})}.
\end{equation}
Reordering this equation, we obtain at constant protein production
\begin{equation}\label{eq:u_hat}
    U=\hat{f}_U(U_{\min}), \text{ where $\hat{f}_U$ is defined as } \hat{f}_U(U_{\min})\coloneqq \frac{\beta_R g(U_{\min})}{1-g(U_{\min})},
\end{equation}
where 
\begin{equation}
    \begin{split}
        g(U_{\min})=&\frac{\alpha_{\min}(1-V_{\min})V_{\max}}{\alpha_{\max}(1-V_{\max})V_{\min}}\left(\frac{U_{\min}}{U_{\min}+\beta_R}\right)\\ &+\frac{c_R}{c_U}\left(1-\frac{\alpha_{\min}(1-V_{\min})V_{\max}}{\alpha_{\max}(1-V_{\max})V_{\min}}\right)+\frac{c_P}{c_U}\frac{\gamma_{P,\max}(1-V_{\min})}{\gamma_{R,\max}V_{\min}}\left(1-\frac{\alpha_{\min}}{\alpha_{\max}}\right).
    \end{split}
\end{equation}
Now that we can link $U_{\min}$ and $U$, we  estimate a range of values for $\alpha_{\min}$ that ensure constant protein production. To produce valid values $U$, the following conditions should hold 
\begin{equation}\label{eq:constraint3}
    U_{\min}\leq \hat{f}_U(U_{\min}),\quad \text{ and }\quad \beta_R+\hat{f}_U(U_{\min})\leq\gamma_{R,\max}V_{\min} \;\text{ (see \eqref{eq:constraint2})}.
\end{equation}
Note that $g$ is increasing, so $\hat{f}_U$ is also increasing. We derive now a lower bound for $\alpha_{\min}$: for a solution to Eq.\eqref{eq:constraint3} to exist, $\hat{f}_U$ has to be less than $\gamma_{R,\max}V_{\min}-\beta_R$ for some value. Thus from
\begin{equation}\label{eq:bound_alpha1}
    \hat{f}_U(0)\leq \gamma_{R,\max}V_{\min}-\beta_R,
\end{equation}
we deduce the lower bound :
\begin{equation}\label{eq:lower_bound_alpha}
    \frac{\alpha_{\min}}{\alpha_{\max}}\geq q_1\coloneqq\frac{(c_R-c_U)\gamma_{R,\max}V_{\min}+c_P\gamma_{P,\max}(1-V_{\min})+c_U\beta_R}{(1-V_{\min})\left(c_R\gamma_{R,\max}\frac{V_{\max}}{1-V_{\max}}+c_P\gamma_{P,\max}\right)}>0.
\end{equation}
Since $\hat{f}_U$ has to be positive to verify Eq.\eqref{eq:constraint3}, such that at least
\begin{equation}\label{eq:bound_alpha2_app}
    \hat{f}_U(\gamma_{R,\max}V_{\min}-\beta_R)\geq 0
\end{equation}
we derive the upper bound

\begin{equation}\label{eq:upper_bound_alpha_app}
    \frac{\alpha_{\min}}{\alpha_{\max}}\leq q_2\coloneqq\frac{c_R\gamma_{R,\max}V_{\min}+c_P\gamma_{P,\max}(1-V_{\min})+c_U\frac{\alpha(1-V_{\min})V_{\max}}{\alpha_{\max}(1-V_{\max})V_{\min}}(\gamma_{R,\max}V_{\min}-\beta_R)}{(1-V_{\min})\left(c_R\gamma_{R,\max}\frac{V_{\max}}{1-V_{\max}}+c_P\gamma_{P,\max}\right)}>0.
\end{equation}

We now study the value of control bounds $U_{\max},V_{\min}$ enabling the existence of an optimal solution of problem \eqref{eq:optimal_formulation}, for a given protein production  $\rho_0$ at high resource intake $\alpha=\alpha_{\max}$. We use the properties of function $\Tilde{F}_{\alpha}$ detailed in the proof of Proposition \ref{prop:cst_ctrl} (Appendix \ref{appendix_3}). In this case, the domain of $(U_{\max},V_{\min})$ that allows to find optimal controls is more simple to obtain, as it is given by
\begin{equation}\label{eq:region_exist_ctrl}
    \{(U_{\max},V_{\min})\quad|\quad \rho^*(U_{\max},V_{\min},\alpha_{\min})\geq \rho_0,\; U_{\max}\geq U_{\min},\; V_{\min}\leq V_{\max} \}.
\end{equation}
In particular, solving $\rho^*(U,V,\alpha_{\min})= \rho_0$ yields an explicit expression for $U= \Tilde{F}_{\alpha_{\min}}(V)$, where  $\Tilde{F}_{\alpha}$ is defined in Appendix \ref{appendix_3}, equation \eqref{eq:tildeF}. Note that while the corresponding limit value for $\alpha_{\min}$ cannot be explicitly derived in general (non trivial optimization), we obtain a lower bound by solving
 \begin{equation}
     \Tilde{F}_{\alpha_{\min}}\left(\frac{U_{\min}+\beta_R}{\gamma_{R,\max}}\right)\leq \gamma_{R,\max}V_{\max}-\beta_R,
 \end{equation}
 which yields
 \begin{equation}\label{eq:domain_UV_app}
    \alpha_{\min}\geq\rho_{0}q_3 \coloneqq\left(\frac{U_{\min}+\beta_R}{\gamma_{R,\max}-(U_{\min}+\beta_R)}\left(c_R\frac{\gamma_{R,\max}}{\gamma_{P,\max}}-c_U\frac{\gamma_{R,\max}V_{\max}-\beta_R}{\gamma_{P,\max}V_{\max}}\right)+c_P\right)>0.
\end{equation}

\newpage
\section{Biophysical parameters}\label{appendix_5}

The parameters used in our numerical application are given in Table \ref{table:parameters}, with values matching with experimental data collected for eukaryotic cells when available, and {\it E. coli} if not. The ribosomal loss rate $\beta_R$ was directly derived from the half life of its main components, ribosomal RNA and proteins (5 days \cite{hirsch_turnover_1966}). The conversion factors $c_R,c_P$ are the average number of amino-acids in a ribosome and in a typical protein in {\it E. coli}, respectively  \cite{scott_emergence_2014}, while $c_U$ represents the number of amino-acids that can be recycled upon targeted ribosome degradation, times an efficiency factor that we set at 0.9. The maximal and minimal fractions of actively translating ribosomes producing ribosomal or generic proteins $V_{\max},V_{\min}$
are taken from \cite{scott_emergence_2014}. The parameters of the function $\gamma_i$  are taken from Klump et {\it al.}~\cite{klumpp_molecular_2013}, where the elongation speed is $$ \gamma=\gamma_{max}\frac{c_T}{c_T+K_M},$$
with $c_T$ the concentration of translation-affiliated proteins (EF-Tu, EF-G, EF-Ts, tRNA synthetases, etc), and $K_M$ the Michaelis constant, $K_M\approx 3\mu M$. We suppose that the concentration of those proteins are proportional to the amino-acid concentration: $c_T= n_0\varphi_0 E$, with $n_0$ a caracteristic number of amino acid in those proteins, and $\varphi_0$ the fraction of amino acids in tRNA complexes, where we set $n_0=400$ \cite{encyclopedia-2004} and $\varphi_0=0.8$ \cite{zhang_generic_2009}. Finally, we obtain $k_D=K_M n_0=7.2\cdot10^5\ /\mu m^3$. The maximum elongation rate $\gamma_{i,\max}$ is computed with the number of amino-acid of the protein and the maximum speed of translation of an amino-acid $ k_{elong}= 30AA/s$ \cite{klumpp_molecular_2013}.

\begin{table}[ht!]
 \caption{\textbf{Table of parameters}}
 \vspace{1em}
 \label{table:parameters}
 \centering
\begin{tabular}{ |c c c |}
 \hline
Parameter & Description & Value\\
  \hline
 $\beta_R$ & Ribosomes degradation rate &  $6\cdot 10^{-3}\;h^{-1}$ \;\cite{hirsch_turnover_1966} \\ &&\\
 $V_{\min}\; - \;V_{\max}$ &Extremal values of $V$ & $0.6\; - \;0.791\in [0.45,0.95]$ \;\cite{scott_emergence_2014} \\ &&\\
 $U_{\min}\; - \;U_{\max}$ & Extremal values of $U$ & $4\; - \;8\;h^{-1}$  \\ &&\\
 $c_P$ & Conversion factor from $E$ to $P$ & $300\in [50,2000]$  \;\cite{Albe_2002_book} \\ &&\\
 $c_R$ & Conversion factor from $E$ to $R$ & $7736$\;\cite{dai_reduction_2017} \\ &&\\
 $c_U$ & Ribophagy conversion factor & $7736*0.9$ \;\cite{dai_reduction_2017} \\ &&\\
 $\gamma_{P,\max}$ & Maximal protein production rate & $300\in$[54,2160] $h^{-1}$ \;\cite{klumpp_molecular_2013}\\ &&\\
 $ \gamma_{R,\max}$ & Maximal ribosome production rate & 15 $h^{-1}$ \;\cite{klumpp_molecular_2013} \\ &&\\
 $\tau$ & Period of the $\alpha$ function & 24 $h$\\ &&\\
 $k_D$ & \makecell[t]{Michaelis-Menten constant\\ for the elongation rate }& $7.2\cdot10^5\ \mu m^{-3}$ \;\cite{dai_reduction_2017,klumpp_molecular_2013}\\ &&\\
 $\alpha_{\min}\; - \;\alpha_{\max}$ & Extremal values of $\alpha$ & \makecell[t]{$2\cdot 10^{8} \;- \;2.5\cdot 10^{8}$ \\ $\in[4\cdot10^7,4\cdot10^{11}]\; h^{-1}\mu m^{-3}$ \;\cite{belliveau_fundamental_2020} } \\

 \hline
 
\end{tabular}
\end{table}

\end{appendices}
\end{document}